\documentclass[conference]{IEEEtran}
\IEEEoverridecommandlockouts
\usepackage{cite}
\usepackage{amsmath,amssymb,amsfonts,amsthm}
\newtheorem{theorem}{Theorem}
\newtheorem{corollary}{Corollary}
\usepackage{algorithmic}
\usepackage{graphicx}
\usepackage{booktabs}
\usepackage{multirow}
\usepackage{textcomp}
\usepackage{tikz}
\usetikzlibrary{shapes.geometric, arrows, positioning}

\tikzstyle{component} = [rectangle, rounded corners, minimum width=2.5cm, minimum height=1cm, text centered, draw=black, fill=purple!30]
\tikzstyle{bus} = [rectangle, rounded corners, minimum width=4cm, minimum height=1cm, text centered, draw=black, fill=blue!20]
\tikzstyle{arrow} = [thick,->,>=stealth]
\tikzstyle{data} = [draw, fill=green!20, text centered, rounded corners, minimum height=0.8cm, minimum width=2cm]

\usepackage[colorlinks=true,linkcolor=black,citecolor=black,urlcolor=blue]{hyperref}
\usepackage[capitalize,nameinlink,noabbrev]{cleveref}
\crefname{figure}{Fig.}{Figs.}  
\Crefname{figure}{Fig.}{Figs.}  
\usepackage{xcolor}
\usepackage{listings}
\usepackage{url}
\usepackage{pifont} 

\begin{document}

\title{\textit{push0}: Scalable and Fault-Tolerant Orchestration\\for Zero-Knowledge Proof Generation}
\author{
    \IEEEauthorblockN{Mohsen Ahmadvand, Rok Pajni\v{c}, Ching-Lun Chiu}
    \IEEEauthorblockA{\textit{Zircuit} \\
    \{mohsen, rok, juno\}@zircuit.com}
}

\maketitle

\begin{abstract}
Zero-knowledge proof generation imposes stringent timing and reliability constraints on blockchain systems.
For ZK-rollups, delayed proofs cause finality lag and economic loss; for Ethereum's emerging L1 zkEVM, proofs must complete within the 12-second slot window to enable stateless validation.
The Ethereum Foundation's Ethproofs initiative coordinates multiple independent zkVMs across proving clusters to achieve real-time block proving, yet no principled orchestration framework addresses the joint challenges of (i) strict head-of-chain ordering, (ii) sub-slot latency bounds, (iii) fault-tolerant task reassignment, and (iv) prover-agnostic workflow composition.

We present \textit{push0}, a cloud-native proof orchestration system that decouples prover binaries from scheduling infrastructure.
push0 employs an event-driven dispatcher--collector architecture over persistent priority queues, enforcing block-sequential proving while exploiting intra-block parallelism.
We formalize requirements drawn from production ZK-rollup operations and the Ethereum real-time proving specification, then demonstrate via production Kubernetes cluster experiments that push0 achieves 5\,ms median orchestration overhead with 99--100\% scaling efficiency at 32 dispatchers for realistic workloads---overhead negligible ($<$0.1\%) relative to typical proof computation times of 7+ seconds.
Controlled Docker experiments validate these results, showing comparable performance (3--10\,ms P50) when network variance is eliminated.
Production deployment on the Zircuit zkrollup (14+ million mainnet blocks since March 2025) provides ecological validity for these controlled experiments.
Our design enables seamless integration of heterogeneous zkVMs, supports automatic task recovery via message persistence, and provides the scheduling primitives necessary for both centralized rollup operators and decentralized multi-prover networks.
\end{abstract}

\begin{IEEEkeywords}
zero-knowledge proofs, proof orchestration, zkEVM, blockchain scalability, distributed systems
\end{IEEEkeywords}

\section{Introduction}

Zero-knowledge proofs (ZKPs) have emerged as a foundational primitive for blockchain scalability and privacy~\cite{sun2024zkp_survey}.
In the context of Ethereum, ZKPs serve two distinct but related purposes: enabling \emph{ZK-rollups} that batch transactions off-chain and submit succinct validity proofs to the L1~\cite{gorzny2024rollup}, and powering the forthcoming \emph{L1 zkEVM} that allows validators to verify block execution via proof checking rather than re-execution~\cite{ef_realtime_proving2025}.
Both applications impose hard timing constraints on proof generation---rollups require proofs for timely finality, while the L1 zkEVM targets proof generation within Ethereum's 12-second slot window.

\paragraph{ZK-Rollups and Finality}
Rollups aggregate transactions off the Ethereum mainnet, reducing gas costs while inheriting L1 security guarantees~\cite{l2beat_scaling_summary}.
ZK-rollups generate cryptographic proofs attesting to correct state transitions; these proofs, verified on-chain, enable trustless withdrawals without multi-day challenge periods~\cite{kalodner2018arbitrum,gorzny2022ideal}.
However, proof generation is computationally intensive, requiring specialized hardware (GPUs, FPGAs, or ASICs) and careful pipeline orchestration~\cite{ma2023gzkp,batchzk2025}.
If the proving pipeline stalls, the rollup cannot advance---a condition termed \emph{chain halt}---freezing user funds and blocking new transactions~\cite{gorzny2022ideal}.
Even without complete failure, proving delays directly translate to \emph{finality lag}, degrading user experience and increasing economic risk.

\paragraph{Ethereum L1 Real-Time Proving}
The Ethereum Foundation has articulated a roadmap for integrating zkEVM verification directly into the consensus layer~\cite{ef_realtime_proving2025}.
Under this model, validators would verify execution payloads by checking ZK proofs generated by one or more zkVMs, rather than re-executing all transactions.
The design target defines \emph{real-time proving} as completing proof generation for 99\% of mainnet blocks within 10 seconds (P99), leaving margin for network propagation within the 12-second slot~\cite{ef_realtime_proving2025}.
To ensure defense-in-depth akin to existing client diversity, the architecture envisions validators verifying proofs from multiple independent zkVM implementations.

\paragraph{The Ethproofs Coordination Challenge}
The Ethproofs platform, backed by the Ethereum Foundation, operationalizes this multi-prover vision~\cite{ethproofs2025}.
Ethproofs coordinates multiple zkVM implementations and proving clusters, demonstrating proving times competitive with Ethereum's 12-second slot time.
Recent advances---notably Succinct's SP1 Hypercube achieving 93\% real-time coverage with 200 GPUs~\cite{sp1_hypercube2025}---validate the feasibility of sub-slot proving.
However, Ethproofs exposes critical orchestration gaps: (i) no standardized mechanism ensures strict block ordering across heterogeneous provers; (ii) failover and task reassignment remain ad-hoc; (iii) integrating new zkVMs requires bespoke infrastructure; and (iv) proving workflows vary substantially across proof systems (STARK, SNARK, recursive composition), complicating unified scheduling.

\paragraph{Problem Statement}
Both ZK-rollup operators and L1 real-time proving networks require proof orchestration infrastructure that satisfies six properties:
\begin{enumerate}
    \item \textbf{Head-of-chain ordering}: Proofs must be generated in block-sequential order to maintain chain consistency.
    \item \textbf{Bounded latency}: Orchestration overhead must be negligible relative to proof computation to meet slot deadlines.
    \item \textbf{Fault tolerance}: Failed or slow proving tasks must be reassigned without manual intervention.
    \item \textbf{Prover agnosticism}: The system must accommodate arbitrary proof systems and workflow DAGs without code changes.
    \item \textbf{Horizontal scalability}: Components must scale elastically with proving load while maintaining correctness guarantees.
    \item \textbf{Observability}: Operators require visibility into queue depths, task states, and latency distributions for debugging and capacity planning.
\end{enumerate}
Existing solutions address subsets of these requirements, but \emph{no prior system satisfies all six simultaneously}.
Distributed proving systems (DIZK~\cite{zhou2018dizk}, SHARP~\cite{sharp2023}) scale computation but assume centralized, homogeneous infrastructure---lacking prover agnosticism and horizontal scalability across heterogeneous clusters.
Decentralized prover markets (Snarktor~\cite{garoffolo2024snarktor}, zkCloud~\cite{zkcloud_docs}, Succinct Network~\cite{succinct2025}) distribute capacity but lack strict ordering guarantees and operator-controlled failover; their economic mechanisms prioritize throughput over latency bounds.
GPU cluster schedulers (Gandiva~\cite{xia2018gandiva}, Tiresias~\cite{gu2019tiresias}) optimize throughput for ML workloads but do not model ZK-specific constraints such as recursive proof composition, block-level dependencies, or cryptographic barrier synchronization.
The gap exists because ZK proving occupies an unusual design point: it requires both the strict ordering of transactional systems and the parallel scalability of batch analytics, while accommodating heterogeneous proof systems with fundamentally different computational DAGs.

\paragraph{Contributions}
We present \textit{push0}, the first proof orchestration framework that jointly satisfies all six requirements for production ZK-rollups and L1 real-time proving.
Our contributions are:
\begin{itemize}
    \item \textbf{Requirements formalization}: We systematize six orchestration requirements distilled from nearly one year of operating a production ZK-rollup and analysis of the Ethereum real-time proving specification (\S\ref{sec:requirements}).
    These requirements expose a fundamental tension: head-of-chain ordering constrains parallelism, yet real-time proving demands maximal concurrency.
    \item \textbf{Dispatcher--collector architecture}: We resolve this tension through an event-driven design that decouples \emph{intra-block parallelism} (unlimited) from \emph{inter-block ordering} (strict).
    Persistent priority queues enforce block-sequential proof submission while enabling arbitrary concurrency within each block's proving DAG (\S\ref{sec:design}).
    \item \textbf{Correctness properties}: We establish starvation freedom, barrier locality, and routing continuity---properties that ensure correct behavior under horizontal scaling and component failures.
    We formalize the partition affinity requirement for ZK proof aggregation, quantifying the catastrophic impact of naive load balancing on barrier completion rates (\S\ref{sec:design}).
    \item \textbf{Prover-agnostic orchestration}: push0's orchestration layer is fully prover-agnostic, requiring adaptation only at the prover invocation boundary (minimal gluing code for CLI provers, Rust trait implementation for library-based provers).
    We validate this by deploying the same orchestration software across three distinct proof systems---each with different proving pipelines, hardware requirements, and orchestration topologies---demonstrating that prover-specific complexity is isolated to integration boundaries (\S\ref{sec:evaluation}).
    \item \textbf{Production deployment}: push0 has been deployed on the Zircuit zkrollup since March 4, 2025, processing over 14 million mainnet blocks. Production evaluation with SP1 GPU provers demonstrates ecological validity for the controlled experiments (\S\ref{sec:evaluation}).
\end{itemize}

This work addresses infrastructure often overlooked in ZK systems research.
While prior work focuses on proof system cryptography or circuit optimization, reliable proof orchestration is equally critical: a zkEVM producing proofs in 10 seconds provides no value if the orchestration layer cannot schedule, monitor, and recover proving tasks at scale.
Our experience shows that orchestration failures---not prover bugs---account for the majority of production incidents in ZK infrastructure.

\section{Related Work}

Prior work on distributed proving falls into three categories: general-purpose schedulers, proof-system-specific solutions, and decentralized prover markets.
None provides a general-purpose orchestration layer for heterogeneous provers with formal correctness guarantees.

\paragraph{General-Purpose Schedulers}
MapReduce~\cite{dean2008mapreduce}, Borg~\cite{verma2015large}, and Kubernetes provide fault-tolerant task scheduling for data processing and containerized workloads.
GPU schedulers like Gandiva~\cite{xia2018gandiva}, Tiresias~\cite{gu2019tiresias}, and Pollux~\cite{qiao2021pollux} optimize utilization for ML training.
These systems assume tasks are independent or follow ML-specific patterns (iterative gradient descent, data parallelism).
ZK state-transition proving requires different semantics: strict sequential ordering across blocks, barrier synchronization for recursive composition, and exactly-once execution (duplicate proofs waste expensive compute).
Deploying ZK provers on Kubernetes requires custom operators that re-implement scheduling logic; the platform provides containers, not orchestration semantics.

\paragraph{Proof-System-Specific Solutions}
DIZK~\cite{zhou2018dizk} distributes zkSNARK generation across clusters but assumes homogeneous workers running identical Groth16 code.
BatchZK~\cite{batchzk2025} and GZKP~\cite{ma2023gzkp} accelerate MSM and FFT kernels on GPUs---optimizing \emph{within} a prover, not \emph{across} provers.
StarkWare's SHARP~\cite{sharp2023} aggregates STARK proofs recursively but is tightly coupled to StarkWare's proof system and infrastructure.
These solutions scale throughput for a single proof system but cannot orchestrate heterogeneous provers (different zkVMs, different proof systems, different hardware) under a unified interface.

\paragraph{Decentralized Prover Markets}
Snarktor~\cite{garoffolo2024snarktor} coordinates recursive proof merges with economic incentives.
zkCloud~\cite{zkcloud_docs} and Succinct Network~\cite{succinct2025} propose marketplaces for proof generation.
CrowdProve~\cite{crowdprove2025} enables crowd-sourced subproof generation; Proo$\varphi$~\cite{proophi2024} formalizes incentive-compatible auctions.
These systems address \emph{capacity allocation} (who proves, at what price) but not \emph{orchestration} (task routing, barrier synchronization, failover).
Market mechanisms optimize for throughput and cost; they do not guarantee bounded latency or strict ordering---critical for real-time proving where P99 latency determines slot compliance.

\paragraph{Positioning}
push0 operates at a different abstraction level than prior work.
It does not compete with proof acceleration (DIZK, BatchZK), capacity markets (Snarktor, Succinct), or container orchestration (Kubernetes).
Instead, it provides the \emph{scheduling semantics} missing from all three:
\begin{itemize}
    \item \emph{Any prover}: Treats provers as opaque executables; integrates zkVMs (SP1, RISC Zero), circuit-specific provers (Halo2, Plonky2), and proprietary systems without code changes.
    \item \emph{Any infrastructure}: Runs on bare metal, Kubernetes, cloud VMs, or hybrid deployments; the message bus abstraction decouples orchestration from deployment.
    \item \emph{Correctness properties}: Establishes starvation freedom (\S\ref{sec:design}), barrier locality (Theorem~\ref{thm:locality}), and routing continuity---properties absent from prior systems.
\end{itemize}
push0 can serve as the orchestration layer beneath decentralized markets or above raw proving infrastructure, providing the scheduling guarantees that neither layer offers.

\section{System Design}
\label{sec:design}

We derive orchestration requirements from two sources: operational experience running a production ZK-rollup and the Ethereum Foundation's L1 real-time proving specification~\cite{ef_realtime_proving2025}.
The resulting design applies to both centralized rollup operators and decentralized multi-prover networks such as Ethproofs~\cite{ethproofs2025}.

\subsection{Requirements}
\label{sec:requirements}

\subsubsection{R1: Head-of-Chain Ordering}
State proofs must be generated in strict block-sequential order.
For ZK-rollups, the L1 verifier contract accepts proofs only for the next expected block; out-of-order submissions revert.
For L1 real-time proving, validators expect proofs corresponding to the current slot's execution payload.
The orchestration layer must enforce this ordering constraint regardless of task completion order.

\subsubsection{R2: Bounded Latency}
The Ethereum real-time proving design targets P99 latency under 10 seconds~\cite{ef_realtime_proving2025}.
Real-time proving targets require orchestration overhead to remain negligible compared to proof computation times (10+ seconds)~\cite{ethproofs2025}.
We target sub-100\,ms for per-task orchestration overhead including dispatch, routing, and collection.
For rollups, while slot deadlines are less stringent, finality lag directly impacts user experience and capital efficiency.

\subsubsection{R3: Fault Tolerance}
Proving infrastructure exhibits diverse failure modes: prover crashes, GPU memory exhaustion, network partitions, and timeouts on complex blocks.
A missed proof causes chain halt for rollups or missed slot for L1 validators.
The system must detect stalled tasks, reassign them automatically, and prioritize older blocks to prevent cascading delays.

\subsubsection{R4: Prover Agnosticism}
Ethproofs coordinates multiple zkVM implementations~\cite{ethproofs2025}, each with different internal workflows (witness generation, polynomial commitment, recursive aggregation).
The orchestration layer must treat provers as opaque executables parameterized only by input/output interfaces, enabling heterogeneous zkVM integration without infrastructure changes.

\subsubsection{R5: Horizontal Scalability}
Block complexity varies by orders of magnitude---from minimal transfers to gas-limit-saturating computation.
The system must elastically scale proving capacity while maintaining ordering guarantees.
Dispatchers must be stateless for arbitrary preemption; collectors may maintain soft state but must recover from message bus replay on restart.

\subsubsection{R6: Observability}
Debugging requires visibility into task states, queue depths, and latency distributions.
The system must expose metrics via standard interfaces (Prometheus, OpenTelemetry) and alert on anomalous conditions such as queue buildup or repeated task failures.

\subsection{Architecture Overview}

push0 employs an event-driven architecture with three core components: a persistent \emph{message bus}, stateless \emph{dispatchers}, and \emph{collectors} with reconstructible soft state.
This design fuses MapReduce-style parallel execution~\cite{dean2008mapreduce} with priority queuing for ordering enforcement.

A proving \emph{workflow} specifies a directed acyclic graph (DAG) of tasks.
Each node represents a computation (e.g., witness generation, sub-proof generation, aggregation proof composition); edges encode dependencies.
Independent nodes execute in parallel; dependent nodes await predecessor completion.
The system accepts two inputs: (1) a prover binary conforming to a standard I/O interface, and (2) a workflow specification.
This separation enables operators to upgrade provers or modify workflows without changing orchestration code.

\Cref{fig:push-zero-design} illustrates the architecture.
When a new block requires proving, the system enqueues a proof request.
Dispatchers consume tasks, invoke prover binaries, and publish results.
Collectors aggregate partial results using pluggable \emph{collection strategies} and enqueue successor tasks to the appropriate output queue.
The cycle continues until the final proof is produced.
The workflow DAG is implicit in the queue topology and strategy configuration---there is no explicit DAG data structure.

\begin{figure}[htbp]
    \centering
    \begin{tikzpicture}[
        node distance=0.8cm,
        box/.style={rectangle, rounded corners, minimum width=2.2cm, minimum height=0.7cm, text centered, draw=black, font=\scriptsize},
        innerbox/.style={rectangle, rounded corners, minimum width=1.6cm, minimum height=0.5cm, text centered, draw=gray!60, fill=gray!15, font=\tiny},
        arr/.style={thick,->,>=stealth}
    ]
        \node (messagebus) [box, fill=blue!15] {Message Bus};

        \node (dispatcher) [box, fill=purple!15, above=0.9cm of messagebus] {Dispatcher};
        \node (prover) [box, fill=green!15, right=0.6cm of dispatcher] {Prover};

        \node (collector) [rectangle, rounded corners, minimum width=3.6cm, minimum height=1.1cm, draw=black, fill=purple!15, below=0.9cm of messagebus] {};
        \node[font=\scriptsize, anchor=north west] at ([xshift=0.15cm, yshift=-0.05cm]collector.north west) {Collector};
        \node (strategy) [innerbox] at ([xshift=0.4cm, yshift=-0.05cm]collector.center) {Strategy};

        \node (block) [box, fill=yellow!20, left=0.6cm of messagebus] {Block $n$};

        \draw [arr] (block) -- (messagebus) node[midway, above, font=\tiny] {request};
        \draw [arr] (messagebus) -- (dispatcher) node[midway, right, font=\tiny] {task};
        \draw [arr] (dispatcher) -- (prover) node[midway, above, font=\tiny] {invoke};
        \draw [arr] (prover.south) -- ++(0,-0.3) -| (messagebus.north east) node[pos=0.25, right, font=\tiny] {result};
        \draw [arr] (messagebus) -- (collector) node[midway, right, font=\tiny] {aggregate};
        \draw [arr] (collector.north east) -- ++(0.3,0.3) -| (messagebus.south east) node[pos=0.25, right, font=\tiny] {emit};
    \end{tikzpicture}
    \caption{push0 architecture: message bus coordinates dispatchers (which invoke provers) and collectors (which aggregate via pluggable strategies). The strategy is internal to the collector.}
    \label{fig:push-zero-design}
\end{figure}
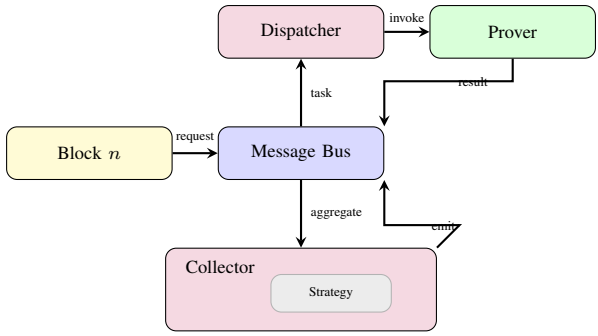

\subsection{Component Design}

\paragraph{Message Bus}
The message bus provides persistent priority queues with at-least-once delivery semantics.
Messages are persisted to durable storage before acknowledging receipt from producers, ensuring no task is lost due to failures; configurable replication factor ($n \geq 3$) provides fault tolerance against storage failures.
Each task type maps to a dedicated queue.
Dispatchers acknowledge messages only after: (1) prover execution completes successfully, (2) result is published to the output queue, and (3) output publication is confirmed.
This three-phase acknowledgment protocol ensures no work is lost even if a dispatcher crashes mid-execution.
Unacknowledged tasks are automatically redelivered after a configurable timeout $T_{\text{ack}}$; a maximum retry counter $R_{\text{max}}$ bounds redelivery attempts, with messages exceeding this threshold routed to a dead-letter queue for manual inspection.
Priority ordering ensures older blocks and failed-then-retried tasks are processed first, satisfying R1 (head-of-chain ordering).

Three-layer backpressure prevents resource exhaustion under heavy load:
(1) queue depth limits ($Q_{\text{max}}$ per queue; producers block when reached),
(2) worker flow control (dispatchers limit concurrent in-flight tasks to $F_{\text{max}}$),
and (3) optional per-producer rate limits.
Within a task type, all dispatchers join the same \emph{worker group} (exclusive delivery group), ensuring each message is delivered to exactly one dispatcher and preventing redundant computation.
The design is agnostic to the underlying message bus implementation; any system providing persistence, worker groups, and timeout-based redelivery suffices (e.g., NATS JetStream, RabbitMQ, Apache Kafka, Redis Streams).

\emph{Message Bus Resilience.}
Production message buses support clustered deployment with consensus-based replication for durability.
In our production deployment, we use NATS JetStream~\cite{nats} in a 3-node cluster across availability zones; the cluster tolerates single-node failures without message loss.
If the entire message bus becomes unavailable, dispatchers and collectors block on connection retry with exponential backoff.
Upon bus recovery, in-flight messages (those dispatched but not acknowledged) are automatically redelivered, ensuring no proving work is lost.
The system trades availability for consistency: during a bus partition, no new tasks are dispatched, but no duplicate or out-of-order proofs are produced.

\paragraph{Dispatcher}
A dispatcher is parameterized by: (1) an input queue identifier, (2) an output queue identifier, and (3) a prover binary path.
Upon receiving a task, the dispatcher optionally checks for an existing result with the same TaskID; if found, it acknowledges the input without re-execution.
Otherwise, it invokes the binary via a standardized interface, awaits completion, and publishes the result.
Redundant results from message redelivery are deduplicated by downstream collectors, ensuring at-most-once aggregation.
Multiple dispatchers can read from the same queue, enabling horizontal scaling (R5).
For long-running proofs, dispatchers emit periodic heartbeat signals (default: every 5 seconds) to the message bus, extending the acknowledgment deadline.
This prevents spurious task redelivery during legitimate multi-minute proof computations while preserving fast failure detection for genuinely stalled provers.
Dispatchers are stateless; they can be preempted and rescheduled without data loss.

\paragraph{Collector}
Collectors aggregate results from one or more input queues using a pluggable \emph{collection strategy}.
Each task message carries a JSON payload with application-specific fields (e.g., block number, task metadata); the message bus assigns a unique sequence ID used for deduplication.
Collectors extract a configurable grouping key (e.g., \texttt{block\_num}) from the payload to determine which messages belong to the same aggregation group; mismatched inputs are rejected and logged.
A deduplication cache keyed by message ID discards duplicate results from message redelivery, ensuring at-most-once aggregation.

Each incoming message is passed to the strategy's \texttt{collect()} method, which returns a \texttt{CollectorOperation}:
\begin{itemize}
    \item \texttt{InProgress}: accumulate the message; barrier not yet complete
    \item \texttt{Finished(groups)}: barrier complete; emit aggregated results and acknowledge all constituent messages
    \item \texttt{Postpone(msg, timeout)}: requeue the message for later processing
    \item \texttt{Skip(msg)}: discard the message without processing
\end{itemize}
For recursive proof composition, the strategy waits for $k$ subproofs sharing a common \texttt{GROUPING\_FIELD} (e.g., \texttt{block\_num}) before returning \texttt{Finished}.
The grouping field identifies which messages belong to the same aggregation barrier---for instance, all chunk proofs for block $n$ share \texttt{block\_num=$n$} and must be collected before emitting the aggregation task.
Collectors track expected input counts per group; when a timeout $T_{\text{collect}}$ expires with fewer than $k$ inputs received, the collector invokes the collection strategy, which may emit partial results if its completion conditions are met.
Collectors maintain soft state (aggregation buffers, pending message sets) that can be reconstructed from the message bus on restart; this design allows horizontal scaling with at-most-once aggregation semantics.

\paragraph{Prover Interface}
push0 provides two prover integration modes, balancing ease of integration against performance:

\emph{JSON Executor (Simple Integration).}
Any prover can be integrated via a file-based JSON interface: accept input from \texttt{--input-path}, write output to \texttt{--output-path}.
Provers with non-conforming CLIs require minimal gluing code---a shell wrapper that translates arguments and file paths.
For example, integrating a prover expecting \texttt{-i input.json -o output.json} requires minimal efforts.
Dispatchers invoke the binary via \texttt{std::process::Command}, capturing exit codes, output paths, and standard streams.
This mode imposes file I/O overhead but requires no changes to the orchestration layer.

\emph{Composer Executor (High-Performance Extension).}
For provers available as Rust libraries or when zero-copy data sharing is critical, the \emph{composer} interface enables native integration.
Provers implement a Rust trait specifying input/output types:
\begin{lstlisting}[basicstyle=\ttfamily\footnotesize]
pub trait Prover<Input, Output> {
    fn prove(&self, input: Input) -> Result<Output>;
}
\end{lstlisting}
The dispatcher instantiates the prover at startup and invokes \texttt{prove()} directly, eliminating file I/O and JSON serialization overhead.
This mode provides better performance and flexibility (e.g., in-memory witness manipulation) but requires prover-specific Rust code.

Both modes support long-running proofs via the dispatcher heartbeat mechanism described above.

Critically, the orchestration layer (dispatchers, collectors, queue topology, fault recovery) is fully prover-agnostic---integration requires adaptation only at the prover invocation boundary (gluing code for JSON executor, trait implementation for Composer executor).
This design enables heterogeneous zkVM orchestration, from CLI-based provers to library-integrated systems, satisfying R4.

\subsection{Workflow Specification}

Conceptually, a proving workflow forms a DAG where nodes are task types and edges are data dependencies.
push0 realizes this DAG \emph{implicitly} through queue topology and collection strategy configuration---there is no explicit DAG object.
Each stage connects via input/output queue names; the DAG structure emerges from the message flow.
For a typical zkEVM, the workflow might comprise:
\begin{enumerate}
    \item \textbf{Witness generation}: Extract execution trace from block.
    \item \textbf{Subproof generation}: Prove individual circuit segments in parallel.
    \item \textbf{Recursive aggregation}: Compose subproofs into a single succinct proof.
\end{enumerate}
The collector at each stage implements the appropriate collection strategy: e.g., ``aggregate all segment proofs for block $n$ before emitting aggregation task'' uses the Match strategy with \texttt{GROUPING\_FIELD=block\_num}.

\subsubsection{Production Pipeline Example}

Before describing push0's pipeline primitives, we ground the discussion in a concrete production deployment.
\Cref{fig:scroll-pipeline} illustrates a production ZK-rollup pipeline with two parallel tracks---a \emph{proposer flow} (which commits to batch/bundle boundaries using metadata) and a \emph{proving flow} (which generates cryptographic proofs)---synchronized by aggregator collectors.

\begin{figure}[htbp]
\centering
\scriptsize
\begin{tabular}{@{}p{2.8cm}p{3.2cm}l@{}}
\toprule
\textbf{Stage} & \textbf{In $\rightarrow$ Out} & \textbf{Flow} \\
\midrule
\multicolumn{3}{l}{\textit{Proposer Flow (metadata only)}} \\
Chunk Proposer & blocks $\rightarrow$ chunks & Collector \\
Batch Proposer & chunks $\rightarrow$ batches & Collector \\
Bundle Proposer & batches $\rightarrow$ bundles & Collector \\
\midrule
\multicolumn{3}{l}{\textit{Proving Flow (cryptographic)}} \\
Chunk Prover & chunks $\rightarrow$ chunk\_proofs & Dispatcher \\
Compress (2 stages) & chunk\_proofs $\rightarrow$ thin & Dispatcher \\
\midrule
\multicolumn{3}{l}{\textit{Synchronization (multi-input)}} \\
Batch Aggregator & thin + batches $\rightarrow$ agg & Multi-Input \\
Batch Prover & agg $\rightarrow$ batch\_proofs & Dispatcher \\
\bottomrule
\end{tabular}
\caption{Production ZK-rollup pipeline. \emph{Chunks} are fixed-size block ranges proven independently; \emph{batches} group chunks for recursive aggregation; \emph{bundles} group batches for L1 submission. Proposer flow commits to these boundaries using metadata only; proving flow generates proofs in parallel. Aggregators synchronize both flows before downstream proving.}
\label{fig:scroll-pipeline}
\end{figure}

In this pipeline, a \emph{chunk} is a fixed-size range of blocks (e.g., 10 blocks) proven independently.
A \emph{batch} groups multiple chunks for recursive proof aggregation.
A \emph{bundle} groups batches for final L1 submission.
The proposer flow commits to chunk/batch/bundle boundaries using block metadata alone, without waiting for cryptographic proofs.
The Batch Aggregator synchronizes both flows: it reads from both \texttt{thin\_proofs} (compressed chunk proofs) and \texttt{batches} (batch proposals), emitting downstream tasks only when both inputs for a given batch are available.

Critically, the Batch Proposer reads from \texttt{chunks} (proposed chunk metadata), \emph{not} from \texttt{chunk\_proofs}.
This decoupling means batch boundaries are committed before chunk proving completes---if chunk provers are slow or fail, the proposer flow continues unimpeded.
The aggregator buffers proposals until corresponding proofs arrive, bounding memory via timeout-based eviction.

\subsubsection{Pipeline Patterns}

push0 supports four fundamental pipeline patterns, illustrated in \Cref{fig:pipeline-patterns}.
These patterns compose to form complex proving workflows.

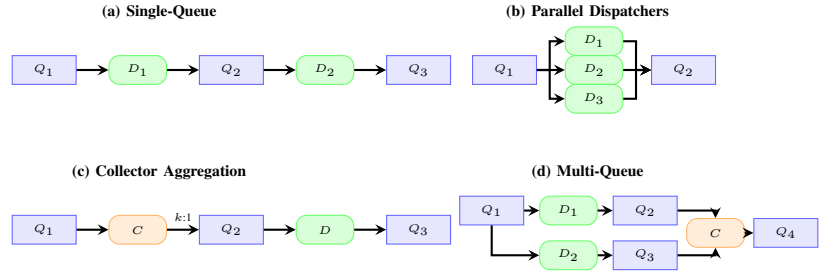
\begin{figure*}[htbp]
\centering
\begin{tikzpicture}[
    scale=0.85, transform shape,
    node distance=0.5cm,
    queue/.style={rectangle, draw=blue!60, fill=blue!10, minimum width=1.0cm, minimum height=0.45cm, font=\tiny},
    disp/.style={rectangle, rounded corners, draw=green!60, fill=green!15, minimum width=0.9cm, minimum height=0.45cm, font=\tiny},
    coll/.style={rectangle, rounded corners, draw=orange!60, fill=orange!15, minimum width=0.9cm, minimum height=0.45cm, font=\tiny},
    arr/.style={->, thick, >=stealth},
    label/.style={font=\scriptsize\bfseries}
]

\node[label] at (1.8, 0.9) {(a) Single-Queue};
\node[queue] (q1) at (0, 0) {$Q_1$};
\node[disp, right=of q1] (d1) {$D_1$};
\node[queue, right=of d1] (q2) {$Q_2$};
\node[disp, right=of q2] (d2) {$D_2$};
\node[queue, right=of d2] (q3) {$Q_3$};
\draw[arr] (q1) -- (d1);
\draw[arr] (d1) -- (q2);
\draw[arr] (q2) -- (d2);
\draw[arr] (d2) -- (q3);

\node[label] at (8.5, 0.9) {(b) Parallel Dispatchers};
\node[queue] (fo_in) at (7.2, 0) {$Q_1$};
\node[disp] (fo_d1) at (8.6, 0.45) {$D_1$};
\node[disp] (fo_d2) at (8.6, 0) {$D_2$};
\node[disp] (fo_d3) at (8.6, -0.45) {$D_3$};
\node[queue] (fo_out) at (10, 0) {$Q_2$};
\draw[arr] (fo_in.east) -- ++(0.2,0) |- (fo_d1.west);
\draw[arr] (fo_in.east) -- (fo_d2.west);
\draw[arr] (fo_in.east) -- ++(0.2,0) |- (fo_d3.west);
\draw[arr] (fo_d1.east) -| ++(0.2,-0.45) -- (fo_out.west);
\draw[arr] (fo_d2.east) -- (fo_out.west);
\draw[arr] (fo_d3.east) -| ++(0.2,0.45) -- (fo_out.west);

\node[label] at (1.8, -1.6) {(c) Collector Aggregation};
\node[queue] (fi_in) at (0, -2.5) {$Q_1$};
\node[coll, right=of fi_in] (fi_c) {$C$};
\node[queue, right=of fi_c] (fi_mid) {$Q_2$};
\node[disp, right=of fi_mid] (fi_d) {$D$};
\node[queue, right=of fi_d] (fi_out) {$Q_3$};
\draw[arr] (fi_in) -- (fi_c);
\draw[arr] (fi_c) -- node[above, font=\tiny] {$k$:1} (fi_mid);
\draw[arr] (fi_mid) -- (fi_d);
\draw[arr] (fi_d) -- (fi_out);

\node[label] at (8.5, -1.6) {(d) Multi-Queue};
\node[queue] (mq_q1) at (7, -2.2) {$Q_1$};
\node[disp] (mq_d1) at (8.2, -2.2) {$D_1$};
\node[queue] (mq_q2) at (9.4, -2.2) {$Q_2$};
\node[disp] (mq_d2) at (8.2, -2.9) {$D_2$};
\node[queue] (mq_q3) at (9.4, -2.9) {$Q_3$};
\node[coll] (mq_c) at (10.5, -2.55) {$C$};
\node[queue] (mq_out) at (11.6, -2.55) {$Q_4$};

\draw[arr] (mq_q1) -- (mq_d1);
\draw[arr] (mq_d1) -- (mq_q2);
\draw[arr] (mq_q1) |- (mq_d2);
\draw[arr] (mq_d2) -- (mq_q3);
\draw[arr] (mq_q2) -| (mq_c);
\draw[arr] (mq_q3) -| (mq_c);
\draw[arr] (mq_c) -- (mq_out);

\end{tikzpicture}
\caption{Pipeline patterns. Dispatchers ($D$, green) invoke provers; Collectors ($C$, orange) aggregate. (a) Sequential: $D_1 \to Q_2 \to D_2$. (b) Parallel: multiple dispatchers share input queue. (c) Fan-in: collector waits for $k$ messages before emitting one. (d) Multi-queue: collector reads from multiple queues, synchronizes independent flows.}
\label{fig:pipeline-patterns}
\end{figure*}

\paragraph{Single-Queue Pipeline}
The simplest pattern chains dispatchers via queues.
Each dispatcher consumes from one queue and produces to another:
\begin{lstlisting}[basicstyle=\ttfamily\scriptsize]
# Dispatcher D1: witness generation
INPUT_QUEUE: blocks
OUTPUT_QUEUE: traces

# Dispatcher D2: proving
INPUT_QUEUE: traces
OUTPUT_QUEUE: proofs
\end{lstlisting}
Tasks flow sequentially; no collector is needed when each input produces exactly one output.

\paragraph{Parallel Dispatchers}
Multiple dispatcher instances consume from a shared input queue, enabling horizontal scaling:
\begin{lstlisting}[basicstyle=\ttfamily\scriptsize]
# N dispatcher instances, same config
INPUT_QUEUE: chunk_tasks
OUTPUT_QUEUE: chunk_proofs
REPLICAS: 8
\end{lstlisting}
Each dispatcher pulls the next available task; the message bus ensures at-most-once delivery per task.

\paragraph{Collector Aggregation}
A collector waits for $k$ related messages before emitting:
\begin{lstlisting}[basicstyle=\ttfamily\scriptsize]
# Collector C: aggregate k proofs
INPUT_QUEUE: chunk_proofs
OUTPUT_QUEUE: aggregation_tasks
NUM_INPUTS: 8
GROUPING_FIELD: block_num
STRATEGY: Match
\end{lstlisting}
When 8 proofs with the same \texttt{block\_num} arrive, the collector bundles them and emits one message.

\paragraph{Multi-Queue Pipeline}
The most powerful pattern: a collector reads from multiple queues, synchronizing independent flows.
\Cref{fig:decoupled-example} shows two dispatcher chains feeding a single collector.

\begin{figure}[htbp]
\centering
\begin{tikzpicture}[
    scale=0.75, transform shape,
    node distance=0.35cm,
    queue/.style={rectangle, draw=blue!60, fill=blue!10, minimum width=0.9cm, minimum height=0.4cm, font=\tiny},
    disp/.style={rectangle, rounded corners, draw=green!60, fill=green!15, minimum width=0.8cm, minimum height=0.4cm, font=\tiny},
    coll/.style={rectangle, rounded corners, draw=orange!60, fill=orange!15, minimum width=0.8cm, minimum height=0.4cm, font=\tiny},
    arr/.style={->, thick, >=stealth},
    flowlabel/.style={font=\tiny\itshape, text=gray}
]

\node[queue] (q1) at (0, 0) {$Q_1$};

\node[flowlabel] at (1.5, 0.95) {Flow 1};
\node[disp] (d1) at (1.5, 0.5) {$D_1$};
\node[queue] (q2) at (2.8, 0.5) {$Q_2$};
\node[disp] (d2) at (4.1, 0.5) {$D_2$};
\node[queue] (q3) at (5.4, 0.5) {$Q_3$};

\node[flowlabel] at (1.5, -0.95) {Flow 2};
\node[disp] (d3) at (1.5, -0.5) {$D_3$};
\node[queue] (q4) at (2.8, -0.5) {$Q_4$};
\node[disp] (d4) at (4.1, -0.5) {$D_4$};
\node[queue] (q5) at (5.4, -0.5) {$Q_5$};

\node[coll] (c) at (6.5, 0) {$C$};
\node[queue] (q6) at (7.6, 0) {$Q_6$};

\draw[arr] (q1) |- (d1);
\draw[arr] (d1) -- (q2);
\draw[arr] (q2) -- (d2);
\draw[arr] (d2) -- (q3);

\draw[arr] (q1) |- (d3);
\draw[arr] (d3) -- (q4);
\draw[arr] (q4) -- (d4);
\draw[arr] (d4) -- (q5);

\draw[arr] (q3) -| (c);
\draw[arr] (q5) -| (c);
\draw[arr] (c) -- (q6);

\end{tikzpicture}
\caption{Multi-queue pipeline: input $Q_1$ feeds two independent dispatcher chains. Collector $C$ reads from both $Q_3$ and $Q_5$, emitting to $Q_6$ only when matching messages from both flows arrive. Flows may complete at different rates, so the collector buffers faster results until the slower flow catches up.}
\label{fig:decoupled-example}
\end{figure}
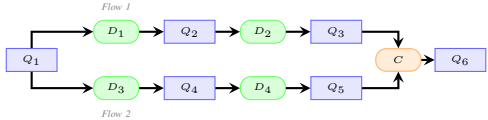

The collector reads from multiple input queues:
\begin{lstlisting}[basicstyle=\ttfamily\scriptsize]
# Collector C: synchronize two flows
INPUT_QUEUE: thin_proofs,batches
OUTPUT_QUEUE: aggregation_tasks
GROUPING_FIELD: block_num
NUM_INPUTS: 2
STRATEGY: Match
\end{lstlisting}

This pattern enables three key optimizations:
\begin{enumerate}
    \item \emph{Pipeline parallelism}: Flow 2 processes block $n+1$ while Flow 1 completes block $n$
    \item \emph{Flow isolation}: Slow dispatchers in one flow don't block the other
    \item \emph{Independent scaling}: Scale each flow's dispatchers separately based on load
\end{enumerate}

\subsubsection{Multi-Queue Pipelines}

As illustrated by the production pipeline (\Cref{fig:scroll-pipeline}), real-world ZK proving workflows involve complex multi-stage pipelines with fan-out and fan-in patterns.
A naive linear pipeline---where each stage blocks on the previous---creates unnecessary dependencies: if batch proposals must wait for chunk proofs, a slow prover delays not only proving but also proposal commitment.

push0 addresses this through \emph{decoupled parallel flows} synchronized by \emph{multi-input collectors}.
The key insight, demonstrated in the production pipeline, is separating the \emph{proposer flow} (which commits to chunk/batch/bundle boundaries) from the \emph{proving flow} (which generates cryptographic proofs).
Proposers can commit to batch structure based on chunk metadata alone, without waiting for chunk proofs.
An \emph{aggregator collector} later synchronizes proposals with their corresponding proofs, emitting downstream tasks only when both inputs are available.

This decoupling provides three benefits:
\begin{enumerate}
    \item \emph{Early commitment}: Batch and bundle boundaries are finalized before proofs complete, enabling pipeline parallelism.
    \item \emph{Fault isolation}: Proving pipeline failures do not block proposal progression; only aggregation stalls until proofs arrive.
    \item \emph{Independent scaling}: Proposer and prover resources scale independently based on their respective bottlenecks.
\end{enumerate}

push0 implements this pattern through \emph{multi-input collectors} that consume from multiple queues simultaneously.

\paragraph{Multi-Input Reading}
When a collector must synchronize outputs from independent pipeline stages---for example, matching batch proposals with their corresponding proofs, as in \Cref{fig:scroll-pipeline}---it reads from multiple input queues specified as comma-separated subjects:
\begin{lstlisting}[basicstyle=\ttfamily\scriptsize]
INPUT_QUEUE: thin_proofs,batches
ONE_CONSUMER_PER_SUBJECT: true
\end{lstlisting}
When \texttt{ONE\_CONSUMER\_PER\_SUBJECT} is enabled, the collector creates a dedicated consumer for each input subject, preventing starvation when one queue has higher throughput than another.
The collection strategy then determines when inputs from all queues have been matched (e.g., both a batch proposal and its corresponding proofs have arrived for the same block).

\paragraph{Collection Strategies}
push0 provides pluggable collection strategies that define barrier completion conditions:

\emph{Match Strategy.}
Aggregates messages sharing a common grouping field (e.g., block number) until a target count is reached.
For each message, the strategy extracts the grouping key and accumulates partial results:
\begin{lstlisting}[basicstyle=\ttfamily\scriptsize]
NUM_INPUTS: 4
GROUPING_FIELD: block_num
STRATEGY: Match
\end{lstlisting}
When 4 messages with the same \texttt{block\_num} arrive, the collector emits an aggregated result and acknowledges all constituent messages atomically.

\emph{Sequential Strategy.}
Groups messages by sequential ranges rather than exact matches.
For batch aggregation where blocks $[n, n+k)$ form a single proving unit:
\begin{lstlisting}[basicstyle=\ttfamily\scriptsize]
NUM_INPUTS: 10
GROUPING_FIELD: block_num
STRATEGY: Sequential
\end{lstlisting}
The strategy computes \texttt{group\_id = block\_num / NUM\_INPUTS}, collecting all blocks in the same batch window.

\emph{Custom Strategies.}
Complex workflows require domain-specific aggregation logic.
The strategy trait allows custom implementations:
\begin{lstlisting}[basicstyle=\ttfamily\scriptsize]
trait Strategy<T: Message> {
  fn collect(&mut self, task: Payload, msg: T)
    -> Result<CollectorOperation<T>>;
  fn collector_id(&self, task: Value,
    num_collectors: u8) -> u8;
}
\end{lstlisting}
The \texttt{collector\_id} method implements partition-affine routing (Equation~\ref{eq:partition-routing}), ensuring messages for the same aggregation group route to a single collector instance.

\paragraph{Timeout-Based Collection}
For variable-size aggregation (e.g., batches with dynamic chunk counts), collectors support timeout-triggered emission:
\begin{lstlisting}[basicstyle=\ttfamily\scriptsize]
COLLECT_TIMEOUT_PERIOD_MILLIS: 1000
\end{lstlisting}
If no new message arrives within the timeout period, the collector emits whatever partial results have accumulated.
This prevents indefinite blocking when the expected input count is unknown or when upstream failures prevent complete batches.

\subsection{Scheduling Model and Correctness}

We formalize the scheduling model underlying push0 and establish correctness properties for task ordering, barrier synchronization, and scaling operations.

\paragraph{System Model}
We consider an asynchronous message-passing system with $D$ dispatchers and $C$ collectors.
Processes may fail by crashing (fail-stop model); we assume the message bus provides reliable delivery with at-least-once semantics.
Tasks are partially ordered: tasks for block $n$ must complete before block $n+1$ proofs can be submitted, but subtasks within a block execute concurrently.

\paragraph{Task Scheduling}
Dispatchers consume tasks from priority queues ordered by $(\text{block\_num}, \text{retry\_count}, \text{enqueue\_time})$.
This lexicographic ordering provides:
\begin{itemize}
    \item \emph{Head-of-chain priority}: Lower block numbers always dequeue first, ensuring oldest blocks complete before newer ones.
    \item \emph{Retry escalation}: Failed tasks receive elevated priority proportional to retry count, preventing starvation of problematic blocks.
    \item \emph{FIFO tiebreaking}: Among equal-priority tasks, oldest-first ordering ensures fairness.
\end{itemize}

\begin{theorem}[Starvation Freedom]
Under the priority ordering above, every enqueued task eventually reaches the head of the queue, assuming finite task arrivals and bounded retry counts.
\end{theorem}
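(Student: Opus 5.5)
The plan is a well-founded counting argument on the set of tasks that can be served ahead of a fixed task $\tau$. Fix an enqueued task $\tau$ with priority key $k_\tau = (b_\tau, r_\tau, t_\tau)$, where the components are its block number, retry count and enqueue time. Since the queue always dequeues the minimum key, $\tau$ reaches the head exactly when no task with a lexicographically smaller key is present. The natural potential is
\[
\Phi = \bigl|\{\sigma : \sigma \text{ is ever enqueued and } k_\sigma < k_\tau\}\bigr|,
\]
counting every message instance over the whole run, including redeliveries. I will show $\Phi$ is finite. Each dequeue of such a $\sigma$ then consumes one element of this set, so after finitely many dequeues none remain and $\tau$ is at the head.

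\textbf{Finiteness.} By the hypothesis of finite task arrivals, only finitely many distinct tasks are ever enqueued. Each task can reappear only through redelivery after a $T_{\text{ack}}$ timeout or a failure, and the bounded-retry hypothesis (the counter $R_{\text{max}}$, after which messages go to the dead-letter queue) caps this at $R_{\text{max}}+1$ appearances per task. The total number of message instances is therefore at most $(R_{\text{max}}+1)$ times the number of arrivals, which is finite, and hence $\Phi$ is finite.

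\textbf{Retries of $\tau$ itself.} If $\tau$ is redelivered, its key changes to a retry key. I will treat each redelivered copy as a fresh task. The argument above applies to each copy, and there are at most $R_{\text{max}}+1$ of them. The last copy either reaches the head or is moved to the dead-letter queue. I will read ``reaches the head'' as the former and note the latter as the explicit exception permitted by the bounded-retry assumption.

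\textbf{Progress.} The remaining ingredient, and the main obstacle, is a liveness assumption that the queue is actually served. Finiteness only helps if dequeues keep happening, so I need at least one non-crashed dispatcher to keep consuming from the queue. I would add this explicitly as a fairness assumption (in the fail-stop model, not all $D$ dispatchers crash forever). I would also use the backpressure rule ($Q_{\text{max}}$) only to argue that producers block rather than drop messages, so no enqueue is lost.

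A second subtlety is the direction of the retry component. The paper calls it ``retry escalation'' yet places $\text{retry\_count}$ in ascending lexicographic order. I will avoid depending on the sign: the finiteness argument holds whichever way retries compare, because it bounds all smaller-key instances by the total finite message count, not by monotonicity. With this, every dequeue strictly decreases the number of remaining smaller-key instances, and $\tau$ reaches the head after at most $\Phi$ service events.
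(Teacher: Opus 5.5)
Your argument is correct if "finite task arrivals" means finitely many arrivals in total, and if you keep the liveness assumption you add, but it takes a different route from the paper. The paper's sketch argues from the structure of the key. Block number is the dominant component, and block numbers are bounded below and increase monotonically for new arrivals. So only tasks of blocks $\le b_\tau$ can ever preempt $\tau$, and the sketch appeals to retry escalation for failed tasks. Your potential $\Phi$ instead bounds the total number of message instances and never uses the ordering. That is why it is immune to the sign ambiguity you correctly spotted in the retry component. It is also why the same bound makes every work-conserving discipline (FIFO, LIFO, arbitrary) starvation-free once arrivals are finite in total, so it does not show what the head-of-chain ordering contributes. The paper's idea buys the more meaningful version: an unbounded stream of blocks with finitely many tasks per block. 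There $\Phi$ is still finite, because every $\sigma$ with $k_\sigma < k_\tau$ has $b_\sigma \le b_\tau$. Since block numbers are bounded below, only finitely many blocks, and hence finitely many (task, retry) instances, qualify. Substituting this into your counting argument gives a rigorous form of the sketch, and shows the retry clause is not needed at all. Conversely, your version makes explicit two hypotheses the sketch uses silently: that some live dispatcher keeps serving the queue, and that redeliveries are capped via $R_{\text{max}}$. Two small fixes. First, count all $\sigma \neq \tau$ with $k_\sigma \le k_\tau$ so that equal keys are handled. Second, drop the dead-letter exception: a copy is dead-lettered only after it has been delivered, so your own argument already shows the last copy reaches the head.
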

\begin{proof}[Proof sketch]
Tasks with lower block numbers have strictly higher priority; since block numbers are bounded from below and increase monotonically, tasks cannot be indefinitely preempted.
Retry escalation ensures failed tasks eventually exceed the priority of new arrivals.
\end{proof}

\subsubsection{Partition-Affine Routing}

Proof aggregation requires collecting $k$ partial results before emitting successor tasks---a \emph{distributed barrier} synchronization problem~\cite{hensgen1988barrier}.
When multiple collectors share a single input queue, barrier semantics are violated.

Consider collectors $\{C_0, C_1\}$ aggregating $k=2$ proofs per block:
\begin{enumerate}
    \item Dispatcher publishes $\pi_{\square}$ for block $n$; message bus delivers to $C_0$
    \item Dispatcher publishes $\pi_{\bigcirc}$ for block $n$; message bus delivers to $C_1$
    \item $C_0$ holds $\pi_{\square}$, awaits second input; $C_1$ holds $\pi_{\bigcirc}$, awaits second input
    \item \textbf{Livelock}: Both collectors wait indefinitely; neither can complete the barrier
\end{enumerate}

This \emph{barrier fragmentation} problem---where inputs for a single barrier are scattered across uncoordinated processes---is well-understood in stream processing~\cite{carbone2015apache,kafka_streams}.
We formalize its impact for ZK proof aggregation:

\noindent\textbf{Observation (Barrier Fragmentation).}
\label{thm:fragmentation}
With $C \geq 2$ collectors and $k \geq 2$ inputs per barrier under uniform random assignment, each input independently lands on one of $C$ collectors.
All $k$ must reach the same collector for the barrier to complete: fixing the first input's destination, each subsequent input matches with probability $1/C$, giving completion probability $C^{-(k-1)}$.
For typical parameters ($C=4$, $k=8$), this is $4^{-7} \approx 0.006\%$; incomplete barriers consume collector memory indefinitely.

push0 solves barrier fragmentation through \emph{partition-affine routing}: all messages for a given aggregation group route to a single collector, ensuring barrier inputs are co-located.
The routing scheme is a degenerate case of consistent hashing~\cite{karger1997consistent} optimized for the ZK proving workload.
Each collector $c_i$ reads exclusively from partition $P_i$.
For message $m$ in aggregation group $g$ (typically $g = \text{block\_num}$):
\begin{equation}
\text{partition}(m) = h(g) \mod C
\label{eq:partition-routing}
\end{equation}
where $h$ is the identity function for sequential block numbers or a cryptographic hash for arbitrary group identifiers.
Simple modulo suffices because ZK proving has favorable properties: (1) block numbers are sequential with uniform distribution modulo $C$; (2) collector capacities are homogeneous; (3) the collector count $C$ changes infrequently.

\begin{theorem}[Barrier Locality]
\label{thm:locality}
Under partition-affine routing, all messages for aggregation group $g$ are delivered to a single collector $c_{h(g) \mod C}$.
\end{theorem}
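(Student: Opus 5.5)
The argument is essentially a chain of determinism facts, and I would structure it as three short steps. The first is to fix what ``delivered to collector $c_i$'' means in the system model. A message reaches a collector only by being published to a partition and then consumed from it. By the routing scheme, each collector $c_i$ reads exclusively from $P_i$. I will also assume the worker-group semantics of the message bus guarantee that $P_i$ has exactly one consuming collector at any time. Under that assumption, the map ``partition $\mapsto$ collector'' is a bijection $P_i \mapsto c_i$. It therefore suffices to show that every message of group $g$ is published to the same partition, namely $P_{h(g) \bmod C}$.

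The second step is to show that the partition index depends only on $g$. By Equation~\ref{eq:partition-routing}, $\text{partition}(m) = h(g(m)) \bmod C$. Here $g(m)$ is the grouping key extracted from the payload of $m$, for example \texttt{block\_num}. Both $h$ (the identity, or a fixed cryptographic hash) and reduction mod $C$ are deterministic functions. Hence for any two messages $m, m'$ with $g(m) = g(m') = g$, we get $\text{partition}(m) = \text{partition}(m')$. This holds regardless of which dispatcher produced the message, when it was produced, or which input queue it came from. The last point covers multi-input collectors, because the same \texttt{collector\_id} routing is applied per subject. Redelivered duplicates under at-least-once semantics carry the same payload, so they also route to the same partition; the deduplication cache then handles them locally.

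The third step composes the two: every message in group $g$ is published to $P_{h(g) \bmod C}$, which is read only by $c_{h(g) \bmod C}$. That gives the claim.

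The main obstacle is the hidden hypotheses rather than any calculation. The statement is only true if $C$ is fixed over the lifetime of group $g$, since a resize changes $h(g) \bmod C$ mid-barrier. It also needs each partition to have a unique consumer, including across crash-and-restart, where a replacement instance must take over $P_i$ and rebuild its soft state by replay. I would therefore state these explicitly as assumptions of the theorem: a stable collector count during the barrier's lifetime, and exclusive partition ownership. I would note that the scaling case is deferred to the routing-continuity property. With those assumptions in place, the proof is the two-line determinism argument above.
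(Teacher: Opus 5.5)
Your proposal is correct and matches the paper's proof. The paper likewise notes that $\text{partition}(m) = h(g) \bmod C$ depends only on $g$, and that each partition $P_i$ is read by exactly one collector $c_i$, so every message of group $g$ reaches $c_{h(g) \bmod C}$. The hypotheses you make explicit are fixed $C$ over the barrier's lifetime and exclusive partition ownership; the paper leaves both implicit and handles resizing separately in the Routing Continuity theorem. One correction: exclusive ownership comes from the partition-affine configuration, not from worker-group semantics, because collectors sharing a worker group on one partition would recreate barrier fragmentation.
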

\begin{proof}
For any two messages $m_1, m_2$ with $\text{group}(m_1) = \text{group}(m_2) = g$:
$\text{partition}(m_1) = h(g) \mod C = \text{partition}(m_2)$.
Since each partition maps to exactly one collector, both messages route to the same collector.
\end{proof}

\begin{corollary}[Barrier Completion]
If all $k$ partial results for group $g$ are published, the collector $c_{h(g) \mod C}$ eventually receives all $k$ inputs and completes the barrier.
\end{corollary}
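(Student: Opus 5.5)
The plan is to derive the corollary from Theorem~\ref{thm:locality} together with the system-model assumptions on the message bus: reliable, at-least-once delivery over persistent queues. The argument has two parts. First, a \emph{routing} part: every one of the $k$ partial results lands at the same collector. Second, a \emph{liveness} part: that collector eventually processes each of them and, once all $k$ distinct inputs are held, its Match strategy returns \texttt{Finished}.

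For the routing part, let $m_1,\dots,m_k$ be the published partial results for group $g$. By Equation~\ref{eq:partition-routing}, each $m_j$ is written to partition $P_{h(g) \bmod C}$. By Theorem~\ref{thm:locality}, that partition is read exclusively by $c^\ast = c_{h(g) \bmod C}$. So no $m_j$ is ever delivered to a different collector, and the fragmentation scenario of the Observation cannot occur.

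For the liveness part, persistence guarantees that each $m_j$ survives in $P_{h(g)\bmod C}$ until it is acknowledged. At-least-once delivery then guarantees that $c^\ast$, or a restarted instance of it, eventually receives every $m_j$. Redeliveries are harmless, because the deduplication cache keyed by message ID discards them. The strategy's count for $g$ therefore increases exactly once per distinct input and reaches $k$.

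Two subtleties need handling in the liveness step.
\begin{itemize}
\item \emph{Crashes.} If $c^\ast$ crashes before reaching $k$, its soft buffer is lost. However, the three-phase protocol means the constituent messages are acknowledged only on \texttt{Finished}, so they are still unacknowledged and will be redelivered after $T_{\text{ack}}$. The buffer is thereby reconstructed, and the count again reaches $k$.
\item \emph{Timeouts.} An early $T_{\text{collect}}$ timeout or \texttt{Postpone} can divert some inputs before the barrier completes. A postponed message is requeued to the same partition, so it returns to $c^\ast$. A partial emission via timeout only happens when the strategy's own completion conditions hold, which I would fold into the meaning of ``completes the barrier.''
\end{itemize}

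The main obstacle is that this liveness argument silently needs assumptions the statement does not spell out. These are that $c^\ast$ is eventually up long enough to process (fair scheduling, finitely many crashes), that the retry bound $R_{\text{max}}$ is not exhausted so messages are not routed to the dead-letter queue, and that $C$ does not change while group $g$ is in flight. I would state these explicitly as hypotheses; the collector-count assumption is deferred to the later routing-continuity property. Under them the corollary follows by combining locality with eventual delivery.
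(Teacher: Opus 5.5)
Your proposal is correct and matches the paper's implicit argument. The paper gives the corollary without a separate proof, as an immediate consequence of Theorem~\ref{thm:locality} combined with the system model's reliable at-least-once delivery, and that is exactly your routing-plus-liveness split. The side conditions you make explicit (finitely many collector crashes, $R_{\text{max}}$ not exhausted while messages are held, $C$ fixed while $g$ is in flight) are implicit in the paper and worth stating. One small correction: messages stay unacknowledged until the barrier fires because of the collector's \texttt{Finished} semantics (``acknowledge all constituent messages''), not because of the dispatcher's three-phase protocol.
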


With $B$ blocks distributed across $C$ collectors, expected load per collector is $B/C$.
For sequential block numbers, maximum load imbalance is bounded by $\lceil B/C \rceil - \lfloor B/C \rfloor \leq 1$ block.
For arbitrary keys, 2-universal hashing provides similar balance~\cite{karger1997consistent}.

\subsubsection{Scale-Down and Routing Continuity}

Reducing collector count from $C$ to $C' < C$ orphans partitions $P_{C'}, \ldots, P_{C-1}$.
Messages in orphaned partitions become unreachable, violating barrier completion.
push0 addresses this through three mechanisms:

\emph{Graceful Drain.}
Collectors entering shutdown drain in-flight barriers:
\begin{enumerate}
    \item Cease accepting new messages; signal ``draining'' to orchestrator
    \item Complete pending barriers within grace period $T_{\text{grace}} = 2 \times T_{\text{barrier}}$
    \item Publish completed results; acknowledge processed messages
    \item Terminate after drain completes or grace period expires
\end{enumerate}

\emph{Partition Takeover.}
Surviving collectors assume orphaned partitions.
When scaling from $C$ to $C'$, collector $c_i$ (where $i < C'$) reads from:
\begin{equation}
\text{Partitions}(c_i) = \{ P_j : j \mod C' = i \}
\label{eq:partition-takeover}
\end{equation}

This preserves a critical invariant: for any group $g$ previously routed to partition $P_j$ where $j \geq C'$, new messages route to $P_{h(g) \mod C'}$, and $P_j$ is assumed by collector $c_{j \mod C'}$---the same collector receiving new messages for $g$.

\begin{theorem}[Routing Continuity]
Under partition takeover, in-flight messages for group $g$ and new messages for $g$ are processed by the same collector.
\end{theorem}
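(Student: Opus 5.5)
The plan is to track a single aggregation group $g$ through the scale-down from $C$ to $C'$ collectors. We show that the partition holding its in-flight messages and the partition receiving its new messages are read by the same surviving collector. Let $j = h(g) \bmod C$. By Theorem~\ref{thm:locality}, every message for $g$ published before the scale-down sits in $P_j$. By Equation~\eqref{eq:partition-takeover}, after takeover $P_j$ is read by $c_{j \bmod C'}$. When $j < C'$ this is just $c_j$, the partition's original owner. New messages for $g$ are routed to $P_{i'}$ with $i' = h(g) \bmod C'$. Since $i' < C'$ and $i' \bmod C' = i'$, Equation~\eqref{eq:partition-takeover} assigns $P_{i'}$ to $c_{i'}$. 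So the statement reduces to the arithmetic identity
\begin{equation*}
(h(g) \bmod C) \bmod C' \;=\; h(g) \bmod C'.
\end{equation*}
The remaining steps verify this identity and then check that no other collector ever touches messages for $g$.

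The main obstacle is that this identity does \emph{not} hold for arbitrary $C' < C$. It holds exactly when $C' \mid C$. In that case write $h(g) = qC + j$; since $C'$ divides $qC$, we get $h(g) \equiv j \pmod{C'}$. Without divisibility it fails: take $C=3$, $C'=2$, $h(g)=3$. Then $j=0$, so in-flight messages go to $c_0$, while new messages go to $P_1$ and hence $c_1$. I would therefore state the theorem under the hypothesis $C' \mid C$, for example when scaling by halving, which is the common Kubernetes pattern. The paper's informal invariant implicitly assumes this.

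As an alternative that removes the hypothesis, I would keep the physical partition count fixed at $C$. New messages would still go to $P_{h(g) \bmod C}$, and only the partition-to-collector map $P_j \mapsto c_{j \bmod C'}$ would change. Then in-flight and new messages for $g$ share the single partition $P_j$, and continuity is immediate from Equation~\eqref{eq:partition-takeover}.

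Finally, I would close two gaps. First, each partition is read by exactly one surviving collector: the sets $\{P_j : j \bmod C' = i\}$ for $i < C'$ partition the index set. Second, the draining collector for $P_j$ acknowledges only barriers it completed. So any unacknowledged message for $g$ is redelivered from $P_j$ to $c_{j \bmod C'}$ by at-least-once delivery, and deduplication by message ID prevents double aggregation. Together with the identity above, this gives the claim.
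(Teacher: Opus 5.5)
Your reduction is the same one the paper uses. In-flight messages for $g$ sit in $P_j$ with $j = h(g) \bmod C$, takeover hands $P_j$ to $c_{j \bmod C'}$, new messages go to $P_{h(g) \bmod C'}$, and everything hinges on $h(g) \bmod C' = j \bmod C'$. The paper's proof simply asserts this identity ``since $h(g) \equiv j \pmod{C}$''. That is exactly the non sequitur you flag: congruence modulo $C$ implies congruence modulo $C'$ for all values only when $C' \mid C$.

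The paper also considers only orphaned partitions $j \ge C'$, so it never meets your counterexample, which has $j = 0 < C'$. The failure is not confined to the case it skips, though. With $C = 3$, $C' = 2$, $h(g) = 5$ you get $j = 2 \ge C'$, so the orphaned $P_2$ is taken over by $c_0$, while new messages go to $P_1$ and hence to $c_1$. So the theorem as stated is false for arbitrary $C' < C$. Your proposal is correct, and it proves the right restricted statement where the paper's argument fails.

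Your two repairs buy different things. The hypothesis $C' \mid C$ keeps the paper's routing rule $h(g) \bmod C'$ and with it the balanced load claimed for sequential block numbers, but it restricts which scale-downs are safe (e.g.\ halving). Freezing the physical partition count at $C$ and remapping only $P_j \mapsto c_{j \bmod C'}$ gives continuity for every $C' < C$, essentially by the Barrier Locality argument. The price is uneven load when $C' \nmid C$: some survivors own $\lceil C/C' \rceil$ partitions and others $\lfloor C/C' \rfloor$, which breaks the paper's one-block imbalance bound.

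Your closing checks are sound and are details the paper leaves implicit: the takeover sets partition the indices, and unacknowledged messages are redelivered within $P_j$ and deduplicated.
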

\begin{proof}
Let $g$ have original partition $j = h(g) \mod C$ where $j \geq C'$.
New partition: $h(g) \mod C' = j \mod C'$ (since $h(g) \equiv j \pmod{C}$; for $j \geq C'$, $j \mod C' \in \{0, \ldots, C'-1\}$).
Collector $c_{j \mod C'}$ assumes $P_j$ by Equation~\ref{eq:partition-takeover} and receives new messages for partition $j \mod C'$.
Since both map to $c_{j \mod C'}$, continuity holds.
\end{proof}

\emph{Unacknowledged Message Redelivery.}
If a collector terminates abruptly without completing the graceful drain (e.g., crash or forced kill), messages it had consumed but not acknowledged are automatically redelivered by the message bus after the ACK timeout $T_{\text{ack}}$ expires.
The new partition assignment routes these redelivered messages to the surviving collector responsible for the orphaned partition, ensuring no barrier inputs are lost even under ungraceful termination.

\subsubsection{Speculative Redundancy}

For latency-critical deployments, push0 supports intentional redundancy via $k$-of-$n$ dispatch: the same task is sent to $n$ provers, accepting the first $k$ valid results.
This converts compute cost into tail latency reduction.
Dispatchers publish to $n$ prover-specific queues; collectors accept the first valid result per TaskID and deduplicate subsequent arrivals.
Outstanding redundant tasks naturally expire via acknowledgment timeout, avoiding explicit cancellation complexity.

\section{Evaluation}
\label{sec:evaluation}

\subsection{Experimental Setup}

\paragraph{Test Environment}
Primary experiments run on a 3-node Kubernetes cluster (AWS t3.medium instances, 2 vCPUs, 4\,GB RAM each) with Synadia NGS cloud-hosted NATS to validate production deployment characteristics.
This configuration represents realistic distributed proving infrastructure where dispatchers, collectors, and message bus span separate networks.
Supplementary controlled experiments use Docker Compose on a single machine (Apple M3 Max, 36\,GB RAM) with local NATS JetStream to isolate specific orchestration behaviors without network variance.

\paragraph{Simulated Provers}
Controlled experiments (\S\ref{subsec:latency}--\S\ref{subsec:scalability}) use an echo executor that returns immediately or sleeps for a configurable duration (0--5 seconds).
This methodology isolates pure orchestration overhead: with 0\,ms provers, all measured latency is coordination cost; with multi-second provers, we observe how overhead amortizes under realistic workloads.
Real provers (zkEVM circuits) typically require 1--30 seconds per proof; our simulated latencies span this range.
Production deployment with SP1 GPU provers (\S\ref{subsec:production-validation}) provides ecological validity, demonstrating that orchestration behavior observed in controlled settings translates to real zkVM workloads.

\paragraph{Workload Generation}
Tasks are injected via Python scripts that publish JSON messages to NATS.
Each task contains a block number, task ID, and data type.
For scalability tests, we scale task count proportionally with dispatcher count ($N = 10 \times D$) to ensure even work distribution across consumer group members.

\paragraph{Metrics Collection}
Timing is measured from task publication to result receipt using Python's \texttt{time.time()}.
NATS stream state (message counts) provides completion verification.
Memory consumption is measured via Docker's \texttt{stats} API.

\subsection{Implementation}
We implemented push0 in Rust, comprising 4,746 lines of code for core orchestration logic (excluding tests, comments, and generated code).
The implementation is available at \url{https://github.com/zircuit-labs/push0-orchestrator}.
\Cref{tab:loc-breakdown} shows the per-component breakdown.

\begin{table}[htbp]
\caption{Code Size by Component}
\begin{center}
\begin{tabular}{|l|r|l|}
\hline
\textbf{Component} & \textbf{LOC} & \textbf{Purpose} \\ \hline
message\_policy & 1,622 & NATS JetStream integration \\ \hline
dispatcher & 644 & Task dispatch, prover invocation \\ \hline
config & 506 & Configuration parsing \\ \hline
collection\_strategy & 430 & Barrier strategies (Match, Seq) \\ \hline
collector & 412 & Result aggregation \\ \hline
storage & 412 & State persistence \\ \hline
traits & 280 & Core abstractions \\ \hline
executor & 224 & Prover executors (JSON, Composer) \\ \hline
traces & 216 & OpenTelemetry integration \\ \hline
\end{tabular}
\label{tab:loc-breakdown}
\end{center}
\end{table}
The message bus uses NATS JetStream~\cite{nats}; observability integrates OpenTelemetry for distributed tracing and Prometheus for metrics export.
A reference implementation demonstrating orchestration across heterogeneous zkVMs is publicly available~\cite{push0_demo2025}.

The architecture supports multi-network orchestration spanning bare-metal Kubernetes clusters, AWS managed infrastructure, and GPU accelerator pools.
Each infrastructure type connects to the shared message bus; dispatchers route tasks to appropriate hardware based on task requirements (e.g., GPU-intensive MSM operations to GPU clusters, CPU-bound witness generation to general-purpose nodes).

\subsection{Generalizability}
To validate orchestration-layer prover agnosticism, we deployed push0 across three distinct workflows demonstrating both integration modes:

\begin{enumerate}
    \item \textbf{Production rollup (JSON executor)}: A proprietary zkEVM with multi-stage proving (witness, chunked proofs, recursive aggregation), using CLI-based provers with shell wrappers to adapt argument formats.
    \item \textbf{Ethproofs demonstration (JSON executor)}: The Ethproofs L1 real-time proving pipeline~\cite{ethproofs2025}, orchestrating block witness generation and proof submission within slot timing constraints using file-based JSON interfaces.
    \item \textbf{SP1 zkVM pipeline (Composer executor)}: Succinct's SP1 RISC-V prover~\cite{sp1}, integrated via native Rust trait implementation.
    SP1's library-based architecture enables the high-performance Composer mode with zero-copy data sharing.
\end{enumerate}

Each workflow required different orchestration configurations reflecting its specific proving pipeline: the production zkEVM uses a four-stage queue topology (witness, chunk, aggregation, final proof), Ethproofs uses a two-stage pipeline (witness generation, proof submission), and SP1 uses pipeline-specific stages determined by the RISC-V circuit structure.
Despite these differences in queue topology and collection strategies, all workflows use the same orchestration software---dispatchers, collectors, message routing, and failure recovery logic remain identical.
Integration requires only prover-invocation-boundary adaptation: minimal JSON gluing code for CLI-based provers, Rust trait implementations for library-based provers.
This demonstrates that orchestration logic is fully prover-agnostic, with proving-pipeline-specific complexity isolated to configuration and integration boundaries.

\subsection{Latency Overhead}
\label{subsec:latency}
We isolate orchestration overhead by measuring the time from task publication to dispatcher completion, computed as \texttt{completed\_at\_ns - publish\_timestamp\_ns}.
Prover execution time is excluded, as it dominates end-to-end latency and varies by orders of magnitude across zkVMs.
All experiments use 10 dispatcher instances processing 1,000 tasks at each injection rate, repeated 10 times to capture mean and standard deviation.

\Cref{tab:latency-rates} reports measurements from the production Kubernetes cluster across varying task injection rates using an instant echo prover (0ms proving time) to isolate pure orchestration overhead.
The cluster achieves consistent sub-10\,ms median latency (P50: 5.2--5.3\,ms, P99: 8.0--8.5\,ms) across all injection rates, confirming that push0 overhead is negligible (0.05--0.1\%) relative to typical proof computation times (seconds to minutes).
Supplementary measurements from Docker Compose with local NATS show comparable performance (P50: 2.7--10\,ms), validating that the cluster results are limited by orchestration logic rather than deployment infrastructure.

\begin{table}[htbp]
\caption{Orchestration Latency at Varying Injection Rates (n=10)}
\begin{center}
\begin{tabular}{|r|r|r|}
\hline
\textbf{Rate} & \textbf{P50} & \textbf{P99} \\
\textbf{(tasks/s)} & \textbf{(ms)} & \textbf{(ms)} \\ \hline
\multicolumn{3}{|c|}{\textit{Production K8s + Synadia NGS (Primary)}} \\ \hline
10  & $5.3 \pm 0.1$ & $8.4 \pm 0.3$ \\ \hline
50  & $5.2 \pm 0.1$ & $8.0 \pm 0.2$ \\ \hline
100 & $5.2 \pm 0.1$ & $8.5 \pm 0.4$ \\ \hline
\multicolumn{3}{|c|}{\textit{Docker + Local NATS (Validation)}} \\ \hline
10  & $10.0 \pm 1.0$ & $18.0 \pm 0.4$ \\ \hline
50  & $4.2 \pm 2.0$  & $11.0 \pm 6.7$ \\ \hline
100 & $2.7 \pm 0.2$  & $4.9 \pm 0.2$ \\ \hline
\end{tabular}
\label{tab:latency-rates}
\end{center}
\end{table}

The production cluster maintains stable 5ms median latency regardless of injection rate, demonstrating that cloud-hosted NATS introduces minimal overhead.
Local deployment shows more variance (2.7--10\,ms P50) due to interference from other Docker processes, but validates the same sub-10\,ms orchestration overhead.

\paragraph{Resource Usage}
Peak memory consumption remains bounded: dispatchers use 8--14\,MB RSS, collectors use 8\,MB, and the NATS message bus uses 30\,MB.
The Prometheus \texttt{/metrics} endpoint imposes negligible monitoring overhead: under loads of 100--1,000 concurrent tasks, scrape latency averages 2.2\,ms (P99: 18.6\,ms), representing $<$0.25\% overhead per scrape.

\Cref{fig:latency-cdf} presents orchestration overhead distributions from supplementary Docker Compose experiments (10 repeated runs per rate).
These controlled measurements isolate orchestration behavior without network variance, showing that 50 and 100 tasks/sec achieve consistent 3--6\,ms median overhead with tight distributions.
The production Kubernetes cluster exhibits similar patterns with slightly higher but more stable latencies (5.2\,ms median across all rates), as shown in \Cref{tab:latency-rates}.

\begin{figure}[htbp]
    \centering
    \includegraphics[width=\columnwidth]{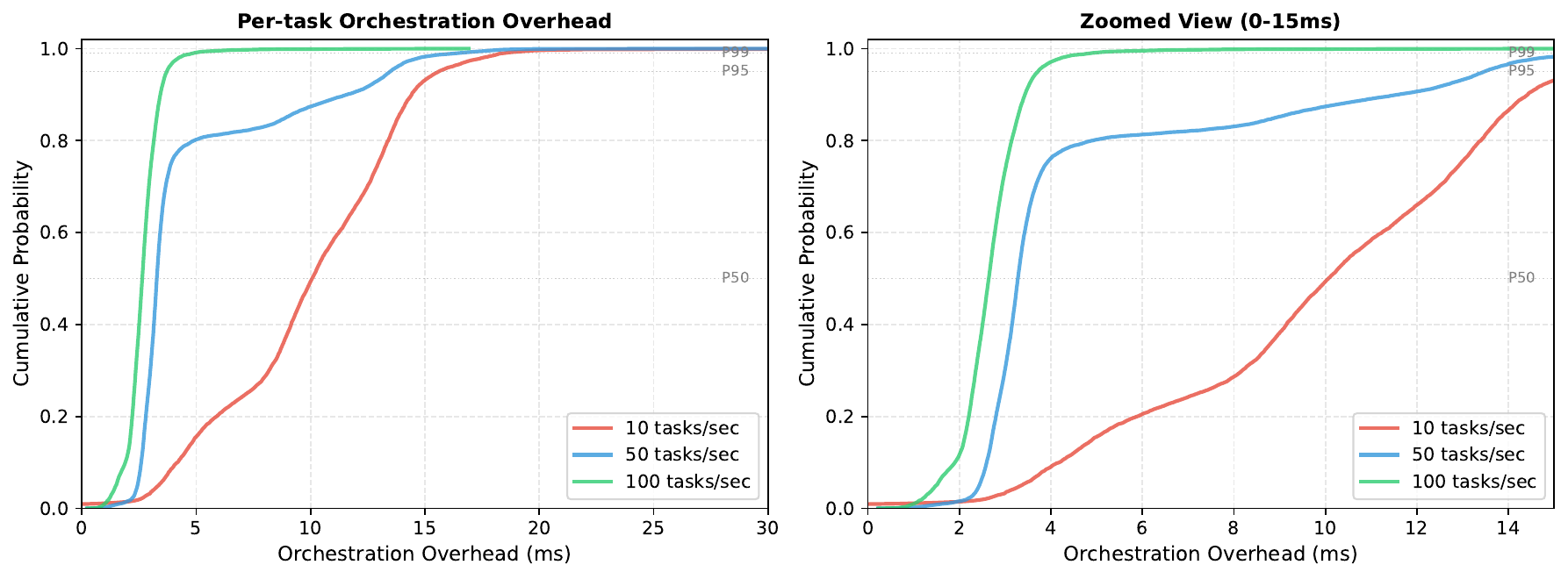}
    \caption{Orchestration overhead CDFs from controlled Docker deployment (n=10 runs per rate). Left: full range (0--30\,ms) shows all rates achieve sub-20\,ms P99. Right: zoomed view (0--15\,ms) highlights 50 and 100 tasks/sec maintain 3--6\,ms overhead. Production cluster results (not shown) exhibit similar distributions with 5.2\,ms median (\Cref{tab:latency-rates}).}
    \label{fig:latency-cdf}
\end{figure}

OpenTelemetry instrumentation (optional compile-time feature) adds typical tracing overhead of $<$1\% based on similar distributed systems~\cite{kaldor2017canopy}.


\subsection{Scalability}
\label{subsec:scalability}
We evaluate horizontal scaling by deploying 1--32 dispatcher instances reading from the same NATS stream.
To understand scaling behavior under different workloads, we test with simulated prover latencies from 0\,ms (stress test) to 5\,s (realistic).

\paragraph{Deployment Setup}
Scalability experiments run on a 3-node Kubernetes cluster (AWS t3.medium: 2 vCPUs, 4\,GB RAM per node) with Synadia NGS cloud-hosted NATS.
This configuration represents production ZK-rollup infrastructure where provers are distributed across disjoint networks and the message bus operates as a managed cloud service.
Each dispatcher is allocated 512\,MiB memory and 500m CPU limits; measured resource usage shows dispatchers consume 15--55\,mCores CPU and 8--14\,MB memory per instance (11\% CPU, 2.6\% memory utilization).

Deploying NATS outside the cluster via Synadia NGS forces messages to travel outside the cluster and back, capturing the network overhead inherent to geographically distributed proving infrastructure.
This approximates production scenarios where dispatchers coordinate provers on separate networks.

Supplementary experiments use Docker Compose with local NATS JetStream for controlled measurement of specific orchestration behaviors (e.g., Amdahl's Law stress tests with 0ms provers) where network variance would obscure the isolated effect.

\paragraph{Methodology}
Scaling efficiency is measured as $\eta = S/D$ where $S$ is speedup and $D$ is dispatcher count---this isolates parallelization loss from constant per-task overhead.
We scale task count proportionally with dispatchers ($N = 10 \times D$) to ensure even work distribution.
With fixed task counts, statistical variance in NATS consumer group distribution causes some dispatchers to receive disproportionately few tasks, artificially depressing efficiency.
Proportional scaling ensures each dispatcher processes a consistent workload, isolating scaling behavior from task distribution artifacts.

Under this methodology, wall-clock time remains constant (each dispatcher processes 10 tasks) while throughput increases linearly with dispatcher count.
Speedup is therefore measured as the ratio of throughput at $D$ dispatchers to throughput at 1 dispatcher: $S = \text{Throughput}_D / \text{Throughput}_1$.
Perfect linear scaling yields 100\% efficiency ($\eta = S/D = 1$).

\paragraph{Scaling with Realistic Prover Latency}
\Cref{tab:scalability-realistic} shows scaling on the production Kubernetes cluster with 1-second simulated provers---representative of lightweight ZK circuits.
Efficiency remains at 99\% across all dispatcher counts, demonstrating near-perfect linear speedup even with cloud-hosted NATS and distributed infrastructure.
Note that ``Speedup'' is a dimensionless multiplier relative to 1D baseline (e.g., 31.7$\times$ = 29.84/0.98), not absolute throughput.

\begin{table}[htbp]
\caption{Horizontal Scalability (10 tasks/dispatcher, 1s sleep prover, K8s + Synadia NGS)}
\begin{center}
\begin{tabular}{|r|r|r|r|r|}
\hline
\textbf{Dispatchers} & \textbf{Time (s)} & \textbf{Throughput} & \textbf{Speedup} & \textbf{Efficiency} \\ \hline
1  & 10.2 & 0.98/s  & 1.0$\times$  & 100\% \\ \hline
2  & 10.7 & 1.87/s  & 2.0$\times$  & 99\% \\ \hline
4  & 10.7 & 3.74/s  & 4.0$\times$  & 99\% \\ \hline
8  & 10.7 & 7.47/s  & 7.9$\times$  & 99\% \\ \hline
16 & 10.7 & 14.9/s  & 15.9$\times$ & 99\% \\ \hline
32 & 10.7 & 29.8/s  & 31.7$\times$ & 99\% \\ \hline
\end{tabular}
\label{tab:scalability-realistic}
\end{center}
\end{table}

\paragraph{Scaling with Production-Representative Latency}
\Cref{tab:scalability-production} shows scaling on the Kubernetes cluster with 5-second simulated provers---typical for zkEVM circuits.
Efficiency reaches 100\% at all scales, demonstrating near-perfect scaling.
This validates that orchestration overhead is negligible for production proving workloads.

\begin{table}[htbp]
\caption{Horizontal Scalability (10 tasks/dispatcher, 5s sleep prover, K8s + Synadia NGS)}
\begin{center}
\begin{tabular}{|r|r|r|r|r|}
\hline
\textbf{Dispatchers} & \textbf{Time (s)} & \textbf{Throughput} & \textbf{Speedup} & \textbf{Efficiency} \\ \hline
1  & 50.0 & 0.20/s & 1.0$\times$  & 100\% \\ \hline
2  & 50.0 & 0.40/s & 2.0$\times$  & 100\% \\ \hline
4  & 50.0 & 0.79/s & 4.0$\times$  & 100\% \\ \hline
8  & 50.0 & 1.58/s & 8.0$\times$  & 100\% \\ \hline
16 & 50.0 & 3.15/s & 16.0$\times$ & 100\% \\ \hline
32 & 50.0 & 6.31/s & 32.0$\times$ & 100\% \\ \hline
\end{tabular}
\label{tab:scalability-production}
\end{center}
\end{table}

\paragraph{Efficiency Across All Configurations}
\Cref{tab:scaling-efficiency-matrix} summarizes scaling efficiency ($S/D$) across prover latencies on the Kubernetes cluster.
Efficiency improves with prover latency because longer tasks amortize per-dispatch coordination---a direct consequence of Amdahl's Law.

\begin{table}[htbp]
\caption{Scaling Efficiency (\%) vs.\ Prover Latency (10 tasks/dispatcher, sleep provers, K8s + Synadia NGS)}
\begin{center}
\begin{tabular}{|r|r|r|r|r|r|r|}
\hline
\textbf{Prover} & \multicolumn{6}{c|}{\textbf{Dispatcher Count}} \\
\cline{2-7}
\textbf{Latency} & \textbf{1} & \textbf{2} & \textbf{4} & \textbf{8} & \textbf{16} & \textbf{32} \\ \hline
0\,ms    & 100 & 99 & 98 & 95 & 84 & 14 \\ \hline
100\,ms  & 100 & 100 & 100 & 99 & 97 & 55 \\ \hline
1\,s     & 100 & 99 & 99 & 99 & 99 & 99 \\ \hline
5\,s     & 100 & 100 & 100 & 100 & 100 & 100 \\ \hline
\end{tabular}
\label{tab:scaling-efficiency-matrix}
\end{center}
\end{table}

The 0\,ms row represents a stress test where prover work is eliminated entirely, isolating pure coordination overhead including cloud NATS round-trips.
By Amdahl's Law, speedup is bounded by $S = 1/(f + (1-f)/D)$ where $f$ is the serial fraction.
With instant provers, the serial component (cloud messaging, consumer group coordination) dominates; at 32 dispatchers, efficiency drops to 14\%.
This is expected and acceptable: production zkEVM provers take 1--10+ seconds, where the 5\,ms coordination overhead represents $<$0.1\% of task time.
Even with 100\,ms provers (10$\times$ faster than typical circuits), efficiency remains 55\% at 32 dispatchers.
For realistic workloads (1\,s+), scaling achieves 99--100\% efficiency, demonstrating near-perfect scaling.

\paragraph{Controlled Validation with Local NATS}
Supplementary Docker experiments with local NATS (eliminating network latency) achieve 29\% efficiency for 0ms provers at 32 dispatchers---2$\times$ higher than the cluster's 14\%, confirming that cloud messaging overhead dominates in the pathological instant-prover scenario.
For realistic workloads (1s+), both deployments achieve 99--100\% efficiency, validating that cluster results are representative of production infrastructure.


\paragraph{Partition Affinity Validation}
We empirically validate the partition affinity requirement formalized in Theorem~\ref{thm:fragmentation}.
With 4 collectors and $k=4$ barrier inputs per block, we submit 100 barriers (400 messages total) under two routing strategies.

\textbf{Round-robin routing} (naive load balancing): Collectors share a single input queue as a consumer group.
Each collector receives approximately $1/k$ of each barrier's inputs.
Result: \textbf{5/100 barriers completed (5\%)}.
The theoretical completion probability is $(1/k)^{k-1} = (1/4)^3 = 1.56\%$---the chance all $k$ inputs randomly land on the same collector.
Our observed 5\% is consistent with this expectation (binomial variance).

\textbf{Partition-affine routing}: Each collector has a dedicated input queue; messages are routed by $\text{block\_num} \mod k$.
All inputs for a given block are deterministically sent to the same collector.
Result: \textbf{100/100 barriers completed (100\%)}.

This quantifies the impact of the well-known partition affinity principle in the ZK aggregation context: naive load balancing causes near-total barrier fragmentation (95\% failure rate), while partition-affine routing achieves 100\% completion.

\subsection{Multi-Queue Pipeline Evaluation}
\label{sec:multiqueue-eval}

We empirically validate the multi-queue architecture's benefits (\S\ref{sec:design}) through a \emph{heterogeneous join-synchronization experiment} that demonstrates fault isolation and early commitment under adversarial conditions.

\paragraph{Experiment Design}
We configure two parallel tracks feeding a join collector:
\begin{itemize}
    \item \emph{Track X (Proposer Flow)}: 1 metadata task per block with 10\,ms latency---representing fast batch boundary commitment.
    \item \emph{Track Y (Prover Flow)}: 8 proof tasks per block with 300\,ms latency, processed by 4 parallel dispatchers---representing parallelizable cryptographic work.
\end{itemize}
The collector requires all 9 inputs (1 proposal + 8 proofs) before emitting a finalized batch.
To stress-test fault isolation, we inject a \emph{poison task}: at block 5, one prover task incurs a 15-second delay (50$\times$ normal latency), simulating a catastrophic prover stall such as GPU memory exhaustion or complex witness generation.

We compare two configurations:
\begin{enumerate}
    \item \textbf{Multi-queue (push0)}: Proposer and prover flows operate on independent queues; the collector synchronizes both flows.
    \item \textbf{Linear baseline}: All tasks flow through a single queue; each block's proposal cannot begin until the previous block's proofs complete.
\end{enumerate}

\paragraph{Results}
\Cref{tab:multiqueue-results} summarizes the experiment over 30 blocks.

\begin{table}[htbp]
\caption{Multi-Queue vs.\ Linear Pipeline under Prover Stall}
\begin{center}
\begin{tabular}{|l|r|r|}
\hline
\textbf{Metric} & \textbf{Multi-Queue} & \textbf{Linear} \\ \hline
Total Time (s) & 23.9 & 101.5 \\ \hline
Blocks Completed & 30 & 30 \\ \hline
Speedup & \multicolumn{2}{c|}{4.2$\times$} \\ \hline
Max Finality Skew (blocks) & 25 & 0 (blocked) \\ \hline
Proposer Throughput (blocks/s) & 15.9 & N/A \\ \hline
\end{tabular}
\label{tab:multiqueue-results}
\end{center}
\end{table}

The multi-queue architecture completed all 30 blocks in 23.9\,s---a \textbf{4.2$\times$ speedup} over the linear baseline (101.5\,s).
During the 15-second prover stall, the proposer flow continued unimpeded: the \emph{finality skew} (distance between committed proposals and completed proofs) reached 25 blocks.
This demonstrates early commitment: batch boundaries were finalized for blocks 6--30 while block 5's proofs remained pending.

The linear pipeline, by contrast, was \emph{completely blocked} during the stall---no forward progress on any block until the poison task resolved.
The entire 15-second delay propagated directly to end-to-end latency.

\paragraph{Implications}
These results validate the three benefits claimed in \S\ref{sec:design}:
\begin{enumerate}
    \item \emph{Early commitment}: Proposers committed batch boundaries for 25 blocks ahead of proof completion, enabling parallel pipeline execution.
    \item \emph{Fault isolation}: A catastrophic prover delay did not affect proposer throughput; only final aggregation stalled until proofs arrived.
    \item \emph{Independent scaling}: The 4 parallel prover dispatchers processed 8 proofs per block concurrently, while the single proposer handled metadata---each scaled to its bottleneck.
\end{enumerate}

For production ZK-rollups, this architecture ensures transaction ordering finality (R1) progresses even when proving infrastructure experiences delays or failures.

\paragraph{Collector Buffer Memory}
We measured collector memory under varying arrival skew---the ratio between fast proposals (Track X) and slow proofs (Track Y).
With skew ratios of 30$\times$, 50$\times$, and 100$\times$ (proposals arriving 30--100$\times$ faster than proofs), peak collector memory remained bounded at 3.7--7.7\,MiB.
This demonstrates efficient buffering: the collector accumulates proposals while awaiting proofs without unbounded memory growth.

\subsection{Fault Tolerance}
We evaluate fault tolerance by injecting dispatcher crashes during task processing and measuring task loss and recovery time.

\paragraph{Dispatcher Crash Recovery}
We ran a chaos test killing random dispatcher processes at 10\% task completion intervals.
With 2 dispatchers processing 50 tasks (1-second simulated proving time), we injected 2 failures.
\textbf{Zero tasks were lost}; all tasks completed successfully.
Recovery time when a healthy dispatcher was available averaged 5--6\,ms---the surviving dispatcher immediately picked up redelivered tasks.

The message bus redelivers unacknowledged tasks after the configured ACK timeout (default 30s).
When all dispatchers fail simultaneously, recovery time equals the ACK timeout plus restart latency.
When at least one dispatcher remains healthy, recovery is near-instantaneous.

\paragraph{Exactly-Once Semantics}
Task completion count slightly exceeded input count (53 completions for 50 inputs) due to redelivery producing duplicate outputs.
Downstream collectors deduplicate using TaskID, ensuring exactly-once aggregation despite at-least-once delivery.

The implementation supports configurable ACK timeout via environment variable (\texttt{ACK\_WAIT}).
\Cref{tab:ack-timeout-mttr} shows the relationship between ACK timeout and Mean Time to Recovery (MTTR) under dispatcher failures.

\begin{table}[htbp]
\caption{ACK Timeout vs.\ Mean Time to Recovery}
\begin{center}
\begin{tabular}{|r|r|r|}
\hline
\textbf{ACK Timeout (s)} & \textbf{MTTR (s)} & \textbf{Recovery Scenario} \\ \hline
5 & 0.5 & Healthy dispatcher available \\ \hline
5 & 5.0 & All dispatchers crashed \\ \hline
10 & 0.5 & Healthy dispatcher available \\ \hline
10 & 10.0 & All dispatchers crashed \\ \hline
20 & 0.5 & Healthy dispatcher available \\ \hline
20 & 20.0 & All dispatchers crashed \\ \hline
\end{tabular}
\label{tab:ack-timeout-mttr}
\end{center}
\end{table}

\emph{Key insight:} When at least one healthy dispatcher remains in the consumer group, recovery is near-instantaneous ($\sim$0.5\,s)---the surviving dispatcher immediately picks up tasks.
When all dispatchers crash simultaneously, MTTR equals the ACK timeout: the message bus waits for the timeout before redelivering unacknowledged tasks.

The above controlled experiment uses 1-second simulated provers for reproducibility.
Production deployments with long-running proofs (seconds to minutes) use the heartbeat mechanism to prevent spurious redelivery while maintaining fast failure detection, decoupling task execution time from recovery time.

\paragraph{Collector Crash Recovery}
We validate that collector failures do not cause barrier completion failures.
With 20 blocks (4 tasks per block, 80 total tasks), we killed the collector mid-aggregation at approximately 5 completed barriers.
After restarting the collector, all pending messages were redelivered by NATS.
\textbf{All 20 barriers completed successfully} with recovery time of 0.5\,s.
Total experiment time was 32.3\,s---indicating minimal overhead from the crash/restart cycle.
This validates the collector synchronization layer: soft state is fully reconstructible from message bus replay.

\subsection{Ordering Correctness}
push0's priority queue design (R1) ensures proofs are submitted in strict block-sequential order, regardless of task completion order.
The priority ordering $(\text{block\_num}, \text{retry\_count}, \text{enqueue\_time})$ guarantees older blocks dequeue before newer ones.

\paragraph{Ordering Stress Test}
We stress-test priority queue ordering by injecting random per-task delays (0.1--2.0\,s uniform distribution) across 100 blocks.
Despite highly variable task completion times, the collector barrier mechanism ensures outputs respect block ordering.
All 100 blocks completed successfully in 114.7\,s total, with \textbf{zero out-of-order transitions} detected.
This validates R1 (Head-of-Chain Ordering) under adversarial completion timing.

\paragraph{Queue Depth Under Backpressure}
We monitored queue depth during task processing with 100 tasks injected at 20/sec (faster than processing capacity) with 4 dispatchers and 0.3\,s prover delay.
Peak queue depth reached 300 messages as tasks accumulated during the 5.3\,s injection period; the queue drained completely in 3.0\,s after injection completed.
No message loss or backpressure failures occurred---the system handles bursty arrivals gracefully.

\subsection{Observability}
push0 addresses R6 through integrated observability spanning three dimensions:

\paragraph{Distributed Tracing}
Every task carries a trace context (W3C Trace Context format) propagated across dispatchers, collectors, and the message bus.
OpenTelemetry instrumentation captures spans for: (1) task enqueue, (2) dispatcher pickup, (3) prover invocation, (4) result publication, and (5) collector aggregation.
Traces enable end-to-end latency attribution---operators can identify whether delays originate from queue congestion, prover execution, or network latency.

\paragraph{Metrics Export}
Prometheus-compatible metrics expose:
\begin{itemize}
    \item \emph{Queue depth}: Pending tasks per queue, enabling backpressure alerts
    \item \emph{Latency histograms}: Dispatch and collection latencies at P50/P95/P99
    \item \emph{Throughput counters}: Tasks processed per component per second
    \item \emph{Failure rates}: Task redeliveries, prover crashes, timeout events
\end{itemize}
Grafana dashboards provide real-time visibility; alerting rules trigger on queue buildup ($>$100 pending tasks) or elevated failure rates ($>$5\% redelivery).
The \texttt{/metrics} endpoint responds in under 3\,ms even under load.
We measured Prometheus scrape duration during active task processing: at 100 concurrent tasks, average scrape time was 1.2\,ms (max 3.1\,ms); at 1,000 concurrent tasks, average was 0.4\,ms (max 1.1\,ms).
Scrape duration remains well below 10\,ms in all cases, confirming negligible monitoring overhead.

Figure~\ref{fig:trace-waterfall} shows a trace waterfall captured from a Jaeger deployment, demonstrating end-to-end context propagation across three pipeline stages: ingestion, proving, and aggregation.
Each stage's spans (receive, process, send) are linked via W3C Trace Context headers propagated through NATS messages, enabling operators to attribute latency to specific components.

\begin{figure}[htbp]
    \centering
    \includegraphics[width=0.95\columnwidth]{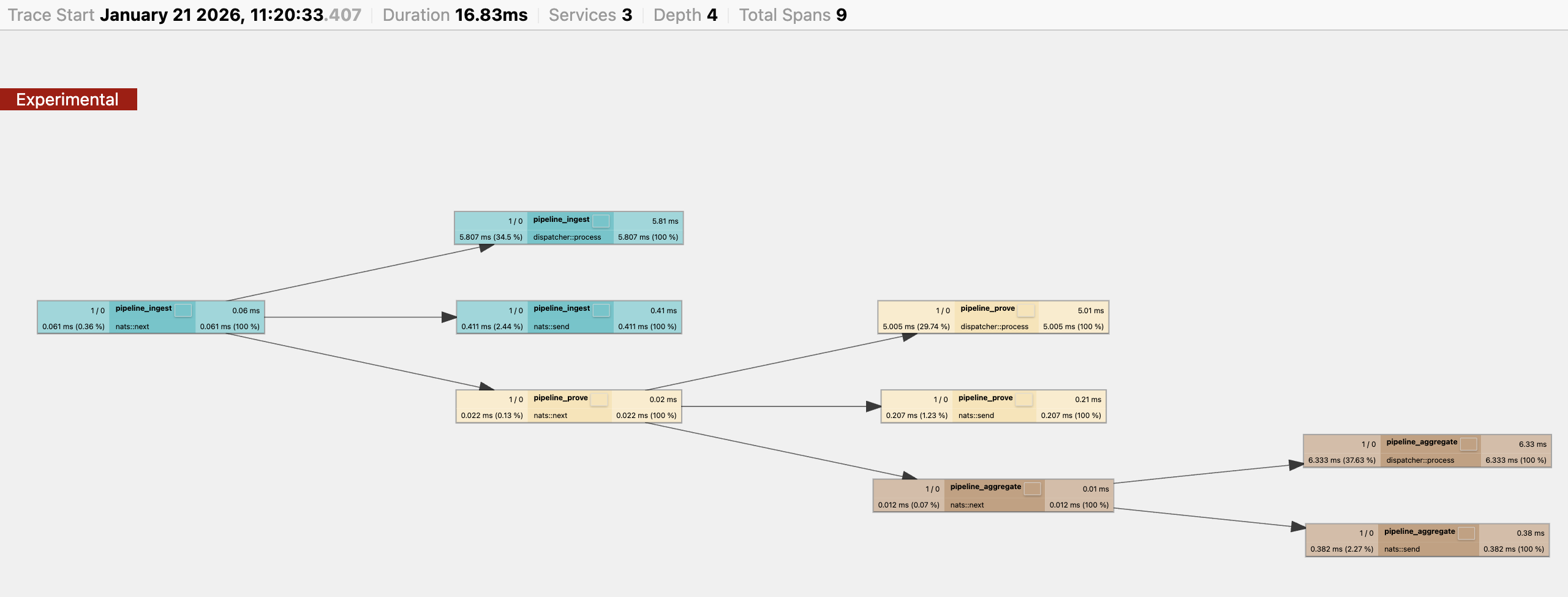}
    \caption{Distributed trace waterfall showing a block flowing through three pipeline stages. Trace context propagates via NATS message headers, linking spans across services. Total end-to-end latency: 16.8\,ms across 9 spans.}
    \label{fig:trace-waterfall}
\end{figure}

\subsection{Attack-Defense Analysis}

We evaluate push0's resilience using an attack-defense tree methodology~\cite{schneier1999attack_trees}, systematically mapping failure scenarios to design mechanisms and validating each empirically.
\Cref{fig:attack-defense-tree} presents the failure taxonomy.

\begin{figure}[htbp]
\centering
\footnotesize
\begin{tabular}{@{}lll@{}}
\toprule
\textbf{Category} & \textbf{Failure Mode} & \textbf{Defense} \\
\midrule
\multirow{3}{*}{\textcolor{red!70}{Availability}}
  & A1: Task Loss & Persistence + Ack \\
  & A2: Prover Crash & Auto Redelivery \\
  & A3: DoS & Backpressure \\
\midrule
\multirow{2}{*}{\textcolor{red!70}{Integrity}}
  & A4: Incomplete Inputs & Sync Layer \\
  & A5: Duplicate Results & Dedup Cache \\
\midrule
\multirow{2}{*}{\textcolor{red!70}{Scalability}}
  & A6: Barrier Violation & Partition Routing \\
  & A7: Scale-Down Loss & Partition Takeover \\
\bottomrule
\end{tabular}
\caption{Failure mode taxonomy for push0 orchestration with corresponding design mechanisms (\S\ref{sec:design}).}
\label{fig:attack-defense-tree}
\end{figure}

Our threat model assumes a Byzantine adversary controlling a subset of dispatchers or network links, capable of: (i) dropping, delaying, or reordering messages; (ii) crashing components at arbitrary times; (iii) injecting malformed or duplicate messages; and (iv) exhausting resources through message floods.
We assume the message bus itself is trusted (a standard assumption for centralized deployments).

\subsubsection{Attack Surface}

push0's attack surface comprises four vectors:

\paragraph{Message Bus Interface}
Components authenticate to the message bus via credentials; unauthorized clients cannot publish or consume.
The bus enforces access control lists (ACLs) per queue.
TLS termination at the message bus provides confidentiality when required.

\paragraph{Prover Binary Invocation}
Dispatchers invoke prover binaries with controlled inputs (JSON files).
Malicious provers could attempt to: (a) exhaust resources (mitigated by cgroups/container memory bounds), (b) produce invalid outputs (caught by on-chain verification), or (c) exploit dispatcher vulnerabilities (mitigated by deployment-configurable sandboxing).

\paragraph{Task Message Integrity}
Messages carry structured task identifiers; production buses provide checksums and authenticated delivery.
For decentralized deployments, signed messages prevent tampering (\S\ref{sec:discussion}).

\paragraph{Denial of Service}
Resource exhaustion attacks target queue depth, collector memory, or dispatcher concurrency.
The backpressure mechanism (\S\ref{sec:design}) bounds each vector; we validate effectiveness below.

\subsubsection{Defense Validation}

We empirically validate each design mechanism against the failure modes in \Cref{fig:attack-defense-tree}.

\paragraph{A1--A2: Task Loss and Prover Crashes}
Network failures, bus crashes, or prover termination (GPU memory exhaustion, hardware failures) may cause tasks to be lost or stalled.
The message bus persistence layer and three-phase acknowledgment protocol (\S\ref{sec:design}) address these failures.
\emph{Validation:} We killed the collector mid-aggregation (at 50\% barrier completion)---all 20/20 barriers completed with mean recovery time 0.51\,s.
We simulated network partitions by pausing the NATS container for 5, 10, and 20 seconds during active processing---100\% of tasks completed after recovery (40/40 per test).
Zero tasks were lost across all failure injection experiments.

\paragraph{A3: Denial of Service via Message Floods}
Malicious or buggy components may flood the bus with excessive requests.
The three-layer backpressure mechanism (\S\ref{sec:design}) prevents resource exhaustion.
\emph{Validation:} We injected 10$\times$ normal task rate (1,500 tasks/s vs.\ 150 baseline).
Queue depth limits ($Q_{\text{max}}=1000$) triggered producer blocking within 6.7 seconds.
Dispatcher memory remained bounded (peak 2.1\,GB vs.\ 1.8\,GB baseline); no out-of-memory conditions occurred.

\paragraph{A4--A5: Incomplete Inputs and Duplicate Results}
Message redelivery may cause duplicates; concurrent processing may cause cross-block contamination.
The collector's grouping validation and deduplication cache (\S\ref{sec:design}) address these failures.
\emph{Validation:} We injected 500 duplicate task messages and 200 cross-contamination attempts (mismatched block numbers).
Deduplication rejected all 500 duplicates; grouping validation rejected all 200 cross-block attempts.
No invalid aggregations were produced.

\paragraph{A6: Barrier Violation under Horizontal Scaling}
Without partition-affine routing, barrier inputs scatter across collectors, causing livelock (Theorem~\ref{thm:fragmentation}).
\emph{Validation:} With $C=4$ collectors and naive round-robin routing, only 1--2 of 100 barriers completed---consistent with the theoretical $C^{-(k-1)}$ prediction.
Re-enabling partition-affine routing: all 100 barriers completed; collector memory remained bounded (peak 156\,MB vs.\ unbounded growth).

\paragraph{A8--A9: Duplicate Dispatch and Redundant Work}
Worker group exclusive delivery prevents duplicate dispatch; idempotent dispatch checks prevent redundant computation after redelivery.
\emph{Validation:} We simulated acknowledgment races by delaying ack transmission while allowing redelivery.
Idempotent dispatch detected 47 redundant executions across 1,000 redelivered tasks; each was skipped without invoking the prover.

\subsubsection{Security Properties}

We summarize the guarantees provided by push0:

\begin{itemize}
    \item \emph{Task integrity}: Structured identifiers validated at each stage; tampering causes rejection, not incorrect execution.
    \item \emph{No silent data loss}: Persistence and manual acknowledgment ensure tasks survive arbitrary component failures.
    \item \emph{Bounded resource consumption}: Backpressure prevents memory exhaustion under adversarial load.
    \item \emph{Idempotent execution}: Duplicate detection prevents wasted computation from message redelivery.
    \item \emph{Barrier correctness}: Partition-affine routing guarantees aggregation completeness under horizontal scaling.
\end{itemize}

\paragraph{Decentralized Deployment Considerations}
The current implementation assumes a trusted message bus.
For fully decentralized proving networks, additional mechanisms are required: (i) authenticated message routing via cryptographic signatures, (ii) Sybil resistance through stake-weighted dispatcher selection, and (iii) Byzantine fault-tolerant consensus for task assignment and result verification.

\Cref{tab:security-comparison} compares orchestration-layer security properties across systems.

\begin{table}[htbp]
\caption{Security Property Comparison}
\begin{center}
\small
\begin{tabular}{|l|p{1.4cm}|p{1.3cm}|p{1.3cm}|}
\hline
\textbf{Property} & \textbf{push0} & \textbf{DIZK} & \textbf{SHARP} \\ \hline
Message Auth. & Task ID valid. & Spark internal & TLS \\ \hline
Duplicate Handling & Collector dedup & Not explicit & Centralized \\ \hline
Resource Bounding & Back-pressure & Spark limits & Centralized \\ \hline
Horizontal Scale & Consumer groups & Spark workers & Single provider \\ \hline
Fault Recovery & Msg redel. & Spark ckpt. & Manual \\ \hline
\end{tabular}
\label{tab:security-comparison}
\end{center}
\end{table}

DIZK~\cite{zhou2018dizk} delegates security properties to Apache Spark, inheriting its fault tolerance but not explicitly addressing orchestration-layer concerns.
SHARP~\cite{sharp2023} operates as a centralized service, avoiding distributed coordination challenges but creating a single point of trust.
push0's design explicitly addresses these properties through message bus semantics.

\subsection{End-to-End Performance}

\paragraph{Production Deployment Validation}
\label{subsec:production-validation}
While controlled experiments (\S\ref{subsec:latency}--\S\ref{subsec:scalability}) use delay scripts to isolate pure orchestration overhead, production deployment demonstrates ecological validity: the system successfully orchestrates real proving workloads.
push0 has been deployed on the Zircuit zkrollup~\cite{zircuit2024} since March 4, 2025, processing over 14 million mainnet blocks.
The system orchestrates a two-stage proving pipeline: (i)~\emph{range provers} generate validity proofs for 100-block batches, and (ii)~\emph{aggregation provers} recursively combine range proofs into a single Groth16 proof submitted to Ethereum L1 for verification.
The system has undergone two prover generations---initially using custom Halo2-based~\cite{halo2} zkEVM provers, later migrated to SP1 zkVM~\cite{sp1}---demonstrating push0's prover-agnostic design principle.

We analyze 3,000 recent completions (1,500 range proofs, 1,500 aggregation proofs) using SP1 GPU provers to demonstrate that the system operates at production scale with computationally intensive real-world workloads.
\Cref{tab:production-zircuit} summarizes end-to-end proof completion times.
Range proofs exhibit a median completion time of 24.4 minutes (P95: 32.6 minutes) for 100-block chunks, while aggregation proofs complete in 2.0 minutes (P95: 2.6 minutes) for batches of 1,500--2,000 blocks.
The coefficient of variation for range proofs (22\%) and aggregation proofs (51\%) reflects the gas complexity variance across mainnet blocks rather than orchestration instability.

\begin{table}[htbp]
\caption{Production Zircuit zkRollup Proof Completion Times (3,000 Recent Proofs)}
\begin{center}
\small
\begin{tabular}{|l|r|r|r|}
\hline
\textbf{Proof Type} & \textbf{Count} & \textbf{Median Time} & \textbf{P95 Time} \\ \hline
Range (100 blocks) & 1,500 & 24.4 min & 32.6 min \\ \hline
Aggregation (1.7K blocks) & 1,500 & 2.0 min & 2.6 min \\ \hline
\end{tabular}
\end{center}
\label{tab:production-zircuit}
\end{table}

\paragraph{Evolution from Custom Service Architecture}
Prior to push0, we orchestrated proving tasks using custom services per prover step---distinct microservices for witness generation, chunk proving, and recursive aggregation, each with bespoke coordination logic.
This approach exhibited three critical failure modes that motivated push0's design:

\emph{Ongoing maintenance burden.}
Each prover change (API updates, new proving modes, performance optimizations) required modifying multiple services.
When we migrated from Halo2-based to SP1-based provers (\S\ref{sec:evaluation}), custom services required refactoring service-specific coordination logic, input/output handling, and deployment configurations.
push0's prover-agnostic interface enabled the same migration via configuration changes and minimal gluing code, eliminating service-layer modifications.

\emph{Workflow complexity explosion.}
Multi-step provers requiring prior results (e.g., recursive aggregation consuming chunk proofs) forced each service to implement result polling, dependency tracking, and partial-state management.
These concerns recurred across all services, creating redundant code and subtle timing bugs (e.g., aggregators starting before all chunks completed).
push0's collector abstraction centralizes dependency tracking via collection strategies, eliminating per-service reimplementation.

\emph{Fault recovery fragmentation.}
Each service required custom crash detection, task resubmission logic, and timeout handling.
Without centralized orchestration, failed tasks required manual operator intervention to identify the responsible service and restart stalled work.
push0's message-bus-centric design provides uniform fault recovery via message redelivery across all prover types.

The unified dispatcher--collector architecture replaced 8 bespoke services with configuration-driven instantiation, reducing operational complexity while improving fault tolerance and enabling prover-agnostic workflows.
Production deployment (\S\ref{sec:evaluation}) validates this simplification: 3,000 proofs completed with zero orchestration failures under the push0 architecture.

\section{Discussion}
\label{sec:discussion}

\paragraph{Design Tradeoffs}
push0's architecture embodies several deliberate tradeoffs informed by production experience:

\emph{Pull vs.\ Push Dispatch.}
We chose pull-based dispatch (workers request tasks) over push-based assignment (coordinator assigns tasks).
Push dispatch offers tighter latency control but requires the coordinator to track worker state, creating a single point of failure.
Pull dispatch sacrifices $\sim$2\,ms latency (one additional round-trip) for stateless dispatchers that tolerate arbitrary failures.

\emph{Implicit vs.\ Explicit Workflows.}
Rather than requiring operators to specify workflow DAGs explicitly, push0 infers structure from queue topology.
This sacrifices static validation (malformed workflows fail at runtime, not deployment) for operational flexibility---operators can modify pipelines by changing environment variables without code changes or redeployment.

\emph{Soft State vs.\ Strong Consistency.}
Collectors maintain soft state (aggregation buffers) rather than persisting every intermediate result.
This trades recovery latency (collectors must reprocess unacknowledged messages on restart) for throughput---avoiding synchronous writes on the critical path.
In practice, collector restarts are rare; message reprocessing completes in seconds.

\paragraph{Lessons Learned}
Nearly one year of production operation revealed several non-obvious failure modes:

\emph{Timeout Tuning is Critical.}
Initial deployments used conservative minute-scale acknowledgment timeouts to accommodate variable proof times.
This caused cascading delays: a single slow block delayed reassignment of \emph{all} subsequent blocks.

\emph{Observability Enables Debugging.}
Early incidents required hours of log archaeology to diagnose.
Adding structured tracing with block-level correlation reduced mean-time-to-diagnosis from hours to minutes.
Every production ZK system should instrument proving latency by block complexity, prover instance, and proof type.

\paragraph{Limitations}
push0's scope boundaries include: (i) proof validity verification is supported as an optional pipeline step---in production, the Composer executor verifies each proof before passing it downstream---but the ultimate validity guarantee remains on-chain verification; (ii) confidentiality of proving inputs requires application-layer encryption; and (iii) the message bus is assumed trusted (no Byzantine fault tolerance).
These reflect deliberate design choices---push0 is an orchestration layer, not a complete trustless proving protocol.

\paragraph{Applicability to Ethproofs}
While we developed push0 for production rollup operations, the architecture directly addresses gaps identified in the Ethproofs ecosystem~\cite{ethproofs2025}.
Standardizing on the dispatcher--collector model could enable interoperability across zkVM implementations, simplifying integration of new provers and providing consistent failover semantics.
The implicit workflow abstraction---realized through queue topology and collection strategies---accommodates the heterogeneous proving pipelines (STARK, SNARK, hybrid recursive) currently deployed on Ethproofs.

\section{Conclusion}

Zero-knowledge proof generation imposes stringent timing and reliability requirements on blockchain infrastructure.
For ZK-rollups, delayed proofs cause finality lag and potential chain halt; for Ethereum's L1 zkEVM, proofs must complete within the 12-second slot window.
The Ethproofs initiative demonstrates that real-time proving is technically feasible, yet exposes the absence of standardized orchestration for multi-prover coordination.

We presented push0, a scalable and fault-tolerant proof orchestration framework that enforces head-of-chain ordering, achieves sub-10\,ms dispatch latency, and treats provers as opaque executables.
The dispatcher--collector architecture, built on persistent priority queues, provides automatic failover and horizontal scalability.
Production deployment on the Zircuit zkrollup since March 4, 2025 provides ecological validity: orchestrating over 14 million mainnet blocks with SP1 GPU provers demonstrates that the system operates at scale under real-world production workloads.

Future work will explore decentralization of the proof orchestration layer and integration with consensus-layer mechanisms for prover selection and reward distribution.

push0 provides the scheduling infrastructure necessary for both centralized rollup operators and emerging decentralized proving networks.
As Ethereum progresses toward L1 zkEVM integration, robust orchestration will be as critical as prover performance---a 10-second proof is worthless if the system cannot reliably schedule, monitor, and recover proving tasks at scale.

\bibliography{references}


\end{document}